\theoremstyle{definition} 
\theoremstyle{definition} 
\newtheorem {theorem} {Theorem}
\newtheorem {lemma} {Lemma}
\newcommand{\ketbra}[1]{\ket{#1}\bra{#1}}
\newcommand{\kb}[1]{\ket{#1}\bra{#1}}
\newcommand{\bk}[2]{\braket{e_{#1}|e_{#2}}}
\newcommand{\ba}{\bar{\alpha}}
\newcommand{\bb}{\bar{\beta}}
\newcommand{\re}[2]{\mathcal{R}_{#1,#2}}
\newcommand{\im}[2]{\mathcal{I}_{#1,#2}}
\newcommand{\ree}[2]{\widehat{\mathcal{R}}_{#1,#2}}
\newcommand{\ime}[2]{\widehat{\mathcal{I}}_{#1,#2}}
\title{Quantum Key Distribution with Mismatched Measurements over Arbitrary Channels}
\author{Walter O. Krawec\\\small{Iona College}\\\small{New Rochelle, NY 10801 USA}\\\small{\texttt{walter.krawec@gmail.com}}}
\begin{document}
\maketitle

\begin{abstract}
In this paper, we derive key-rate expressions for different quantum key distribution protocols.  Our key-rate equations utilize multiple channel statistics, including those gathered from mismatched measurement bases - i.e., when Alice and Bob choose incompatible bases.  In particular, we will consider an Extended B92 and a two-way semi-quantum protocol.  For both these protocols, we demonstrate that their tolerance to noise is higher than previously thought - in fact, we will show the semi-quantum protocol can actually tolerate the same noise level as the fully quantum BB84 protocol.  Along the way, we will also consider an optimal QKD protocol for various quantum channels.  Finally, all the key-rate expressions which we derive in this paper are applicable to any arbitrary, not necessarily symmetric, quantum channel.
\end{abstract}

\section{Introduction}
Quantum key distribution protocols allow two users, Alice ($A$) and Bob ($B$), to establish a shared secret key secure against an all-powerful adversary Eve ($E$) who is bounded only by the laws of physics - an end unattainable through classical means alone.  Several such protocols have been developed since the original BB84 \cite{QKD-BB84} (the reader is referred to \cite{QKD-survey} for a general survey) and many of them include rigorous proofs of unconditional security.  Such a proof of security generally involves determining a bound on the protocol's key-rate (to be defined shortly, though roughly speaking, it is the ratio of secret key bits to qubits sent) as a function of the observed noise in the quantum channel.

In this paper, we consider several QKD protocols, and derive key-rate expressions based on multiple channel statistics.  Furthermore, our key-rate bounds will utilize statistics from mismatched measurements; that is to say, those measurement outcomes where A and B's choice of bases are incompatible - events which are typically discarded by the protocol specification (there are exceptions as we mention next section).  In fact, by using these mismatched measurement results, the key-rate bounds we derive demonstrate that many of the protocols we consider here can actually tolerate higher levels of noise than previously thought.

Following an overview of related work, we will first derive a general approach to deriving key-rate expressions, in the asymptotic scenario, for a wide-range of discrete variable QKD protocol utilizing all possible channel statistics, including mismatched measurement results.  Secondly, we apply this technique to two, very different protocols (an Extended B92 \cite{QKD-B92-extended}, and a two-way semi-quantum protocol \cite{SQKD-first}), deriving new key-rate expressions applicable to arbitrary, possibly asymmetric quantum channels, and, for both protocols, resulting in new bounds which are substantial improvements over previous work with these protocols; in particular, our new key rate expression for the Extended B92 protocol and the semi-quantum protocol will show that these protocols can withstand higher levels of noise than previously thought.  Along the way, we will also use our method to investigate optimal QKD protocols for asymmetric channels.

\subsection{Related Work}

We are not the first to consider the use of mismatched measurement outcomes for quantum key distribution.  Nor are we the first to show that these statistics can lead to improved key rates.  Indeed, in the 1990's Barnett et al. \cite{QKD-Tom-First} showed that mismatched measurement results may be used to better detect an eavesdropper using an intercept-and-resend attack.

In \cite{QKD-Tom-KeyRateIncrease}, mismatched measurement bases were applied to the four-state and six-state BB84 protocols \cite{QKD-BB84}.  This method was shown to improve the key rate for certain quantum channels, namely the amplitude damping channel and rotation channel.  They also derived expressions for non symmetric channels.  In \cite{QKD-Tom-KeyRateMismatchedDistill}, mismatched measurement results were actually used to distill a raw key (as opposed to being used only for channel tomography) - a modified BB84 protocol was adopted and this method was shown to improve the key rate for certain channels.

In \cite{QKD-Tom-BB84NarrowAngle}, a modified two-basis BB84 was developed where the first basis was the standard computational $Z$ basis ($\{\ket{0}, \ket{1}\}$), while the second consisted of states $\ket{0}_\theta = \cos\theta\ket{0} + \sin\theta\ket{1}$ and $\ket{1}_\theta = -\sin\theta\ket{0} + \cos\theta\ket{1}$ where $\theta < \pi/4$.  The authors of that work showed that for small $\theta$, mismatched measurement bases can still be used to gain good channel estimates while also allowing $A$ and $B$ to use mismatched measurement bases to distill their key (since, for $\theta$ small, even with differing bases, their measurement results will be nearly correlated).

Mismatched measurements were used in \cite{SQKD-Krawec-ReflectionSecurity} in order to get better channel statistics for a single-state semi-quantum protocol first introduced in \cite{SQKD-Single-Security}.  Though single-state semi-quantum protocols utilize two-way quantum channels, they admit many simplifications which ease their security analysis.  In this paper, we consider a multi-state semi-quantum protocol (which are more difficult to analyze) and show mismatched measurements improve its key-rate; indeed, our new key rate bound derived in this paper shows this semi-quantum protocol has the same noise tolerance as the fully-quantum BB84 protocol.

In \cite{QKD-Tom-threestate1}, it was proven, using mismatched measurement bases, that the three-state BB84 protocol from \cite{QKD-BB84-three-state,QKD-BB84-three-state-v2} has a key rate equal to that of the full four state BB84 protocol assuming a symmetric attack.  Also a four-state protocol using three bases has a key rate equal to that of the full six-state BB84 protocol.

In this paper, building off of our conference paper in \cite{QKD-Tom-threestate-Krawec} (where we only considered three states for parameter estimation), we will apply mismatched measurements to non-BB84 style protocols and to protocols relying on two-way quantum channels.  After an introduction to our notation, we will first explain the parameter estimation method and our technique.  We will then apply it to the Extended B92 protocol \cite{QKD-B92-extended} and derive an improved key-rate bound for it.  We will then use our method to consider an ``optimal'' QKD protocol.  Finally, we will analyze a multi-state semi-quantum protocol from \cite{SQKD-first} which relies on a two-way quantum channel.  This new proof of security will derive a far more optimistic bound on the key rate expression than the one previously constructed in \cite{SQKD-Krawec-SecurityProof} (the latter did not use mismatched measurement bases).

\subsection{Notation}

We now introduce some notation we will use.  Let $H(\cdot)$ be the Shannon entropy function, namely:
\[
H(p_1, p_2, \cdots, p_n) = -\sum_{i=1}^np_i\log p_i,
\]
where all logarithms in this paper are base two unless otherwise specified.  We will occasionally use the notation $H(\{p_i\}_{i=1}^n)$ to mean $H(p_1, \cdots, p_n)$.  We denote by $h(x)$ the binary Shannon entropy function, namely: $h(x) = -x\log x - (1-x)\log (1-x)$.

We write $S(\rho)$ to mean the von Neumann entropy of the density operator $\rho$.  If $\rho$ is finite dimensional (and all systems in this paper are finite dimensional), then let $\{\lambda_1, \cdots, \lambda_n\}$ be its eigenvalues.  In this case $S(\rho) = -\sum_i\lambda_i\log \lambda_i$.

If $\rho$ acts on a bipartite Hilbert space $\mathcal{H}_A\otimes \mathcal{H}_B$ we will often write $\rho_{AB}$.  When we write $\rho_A$ we mean the result of tracing out $B$'s portion of $\rho_{AB}$ (that is, $\rho_A = tr_B\rho_{AB}$).  Similarly for $\rho_B$ and for systems with three or more subspaces.

Given density operator $\rho_{AB}$ we will write $S(AB)_\rho$ to mean $S(\rho_{AB})$ and $S(B)_\rho$ to mean $S(\rho_B)$.  We denote by $S(A|B)_\rho$ to be the conditional von Neumman entropy defined as $S(A|B)_\rho = S(AB)_\rho - S(B)_\rho = S(\rho_{AB}) - S(\rho_B)$.  If the context is clear, we will forgo writing the ``$\rho$'' subscript.

When we talk about qubits, we will often refer to the $Z$, $X$, and $Y$ bases, the states of which we denote: $Z = \{\ket{0}, \ket{1}\}$, $X = \{\ket{+}, \ket{-}\}$, and $Y = \{\ket{0_Y}, \ket{1_Y}\}$, where:
\begin{align*}
\ket{\pm} &= \frac{1}{\sqrt{2}}(\ket{0} \pm \ket{1})\\
\ket{j_Y} &= \frac{1}{\sqrt{2}}(\ket{0} + (-1)^ji\ket{1}).
\end{align*}

\section{Channel Tomography}\label{section:PE}

We will now describe the parameter estimation method and what may be gleaned from it using our notation.

Let $U$ be a unitary operator acting on the finite dimensional Hilbert space $\mathcal{H}_T\otimes\mathcal{H}_E$ where $\dim\mathcal{H}_T=2$ and $\dim\mathcal{H}_E < \infty$ ($U$ will model Eve's attack operation); $\mathcal{H}_T$ models the qubit ``transit'' space, while $\mathcal{H}_E$ will model the adversary $E$'s private quantum memory.  Without loss of generality, we may describe $U$'s action on states of the form $\ket{i,\chi}_{TE}$, where $\ket{\chi}_E$ is some arbitrary, normalized state in $\mathcal{H}_E$, as follows:
\begin{align}
U\ket{0,\chi} &= \ket{0,e_0} + \ket{1,e_1}\label{eq:U-states}\\
U\ket{1,\chi} &= \ket{0, e_2} + \ket{1,e_3},\notag
\end{align}
where the $\ket{e_i}$ are states in $\mathcal{H}_E$ which are not necessarily normalized nor orthogonal.  Unitarity of $U$ imposes certain obvious restrictions on these states which will become important momentarily.

Let $Z = \{\ket{0}, \ket{1}\}$ be the computational $Z$ basis.  Let $\alpha, \beta \in [0,1]$ and denote by $\ba = \sqrt{1-\alpha^2}$ and $\bb = \sqrt{1-\beta^2}$.  We will always assume the choice of $\alpha$ and $\beta$ is public knowledge and, once chosen, remains fixed.

Define $\mathcal{A}_{\alpha} = \{\ket{a}, \ket{\bar{a}}\}$ and $\mathcal{B}_\beta = \{\ket{b}, \ket{\bar{b}}\}$, where:
\begin{align}
\ket{a} &= \alpha\ket{0} + \ba\ket{1} = \alpha\ket{0} + \sqrt{1-\alpha^2}\ket{1}\\
\ket{\bar{a}} &= \ba\ket{0} - \alpha\ket{1} = \sqrt{1-\alpha^2}\ket{0} - \alpha\ket{1}\\\notag\\
\ket{b} &= \beta\ket{0} + \bb i\ket{1} = \beta\ket{0} + i\sqrt{1-\beta}\ket{1}\\
\ket{\bar{b}} &= \bb\ket{0} - \beta i\ket{1} = \sqrt{1-\beta^2}\ket{0} - \beta i \ket{1}.
\end{align}
Clearly $\mathcal{A}_\alpha$ and $\mathcal{B}_\beta$ are both orthonormal bases of $\mathcal{H}_T$.  Note that when $\alpha = \beta = 1/\sqrt{2}$ we have $\mathcal{A} = X$, the Hadamard $X$ basis, and $\mathcal{B} = Y$, the $Y$ basis.  The $Z$ and $X$ bases are customarily used in the BB84 protocol, while the $Z$, $X$, and $Y$ bases are used in the six-state BB84 protocol \cite{QKD-BB84,QKD-renner-keyrate}.  Note also that when $\alpha = \beta = 1$, we have $Z = \mathcal{A} = \mathcal{B}$.

As it will be important later, we note that:
\begin{align}
U\ket{a,\chi} &= \ket{a,f_0} + \ket{\bar{a},f_1}\label{eq:U-states-f}
\end{align}
where, due to linearity of $U$, we have:
\begin{align}
\ket{f_0} &= \alpha^2\ket{e_0} + \alpha\ba\ket{e_2} + \alpha\ba\ket{e_1} + \ba^2\ket{e_3}\label{eq:U-states-f2}\\
\ket{f_1} &= \ba\alpha\ket{e_0} + \ba^2\ket{e_2} - \alpha^2\ket{e_1} - \alpha\ba\ket{e_3}.\notag
\end{align}

As the operator $U$ will be used to model $E$'s attack, we are interested in determining what $A$ and $B$ can learn about it (in particular, the $\ket{e_i}$ states) after performing the parameter estimation protocol described in Protocol \ref{alg:PE}.

\begin{algorithm}
\caption{Parameter Estimation with Mismatched Bases}\label{alg:PE}
Let $\alpha, \beta$ be fixed and public knowledge.  Let $\Psi \subset Z \cup \mathcal{A}_\alpha \cup \mathcal{B}_\beta$ also be fixed and public knowledge.  Finally fix an attack operator $U$ known only to $E$.

Repeat the following procedure for $i = 1, 2, \cdots, M$:
\begin{enumerate}
  \item $A$ chooses a random state $\ket{\psi_i} \in \Psi$ and sends a qubit prepared in this state to $B$.
  \item $E$ captures the qubit before it arrives at $B$'s lab and probes it using her attack operator $U$.  We assume this operator acts on the qubit and a freshly prepared ancilla $\mathcal{H}_E$ prepared in a state $\ket{\chi}_E$ known only to $E$, but constant each iteration.  That is, the system evolves from $\ket{\psi_i,\chi}$ to $U\ket{\psi_i,\chi}$.  The qubit (in $\mathcal{H}_T$) is sent to $B$.
  \item $B$ chooses a basis $Z$, $\mathcal{A}_\alpha$, or $\mathcal{B}_\beta$ (assuming a state from one of these bases appears in $\Psi$) and measures the qubit in that basis.
  \item $A$ and $B$ disclose their choice of basis.  $A$ discloses the state she prepared and $B$ discloses his measurement result.  This communication is done via an authenticated classical channel.  Note that they do not discard any measurement results (i.e., they do not discard iterations where they chose different bases).
\end{enumerate}
\end{algorithm}

We will consider two cases for the choice of $\Psi$ in this paper (others may easily follow from our computations):
\begin{align}
\Psi_3 &= \Psi_3^\alpha = \{\ket{0}, \ket{1}, \ket{a}\}\label{eq:psi3}\\
\Psi_4 &= \Psi_4^{\alpha,\beta} = \{\ket{0},\ket{1}, \ket{a}, \ket{b}\}.\label{eq:psi4}
\end{align}
That is, in the first case, $A$ is limited to only sending three possible states from the four available in the $Z$ and $\mathcal{A}$ bases; in the second she can only send four possible states from the six available in the $Z$, $\mathcal{A}$, and $\mathcal{B}$ bases.  Note that, on step 3 of the parameter estimation protocol above, $B$ will measure in either the $Z$ or $\mathcal{A}$ bases in the first case; in the second case he will measure in the $Z$, $\mathcal{A}$, or $\mathcal{B}$ bases (his outcomes could be any of the four states or six states respectively).  Thus, $A$ and $B$ can, for instance, estimate the probability of $U$ flipping a $\ket{a}$ to a $\ket{\bar{a}}$ but they cannot directly observe the probability of $U$ flipping a $\ket{\bar{a}}$ to a $\ket{a}$ (as $A$ can never prepare the state $\ket{\bar{a}}$).

Note that in this paper, we assume $M$ may be set large enough so as to attain arbitrarily high levels of accuracy in all estimates.  This is clearly without loss of generality in the asymptotic scenario in QKD security; it remains an open question to analyze this scenario in the finite key setting where imprecisions will affect the final key rate computation.

Clearly $A$ and $B$ may estimate $\braket{e_i|e_i}$ by using those iterations where $\ket{\psi_i} \in Z$ and $B$ measures in the $Z$ bases.  Now, denote by $p_{i,j}$, where $\ket{i} \in \Psi$ and $\ket{j} \in Z\cup\mathcal{A}\cup\mathcal{B}$, to be the probability that $B$ measures $\ket{j}$ if $A$ initially sends $\ket{i}$ in step 1 of the protocol, conditioning on the event both parties choose the correct basis for such an outcome.  Also, let $\re{i}{j}$ and $\im{i}{j}$ denote the real part of $\braket{e_i|e_j}$ and the imaginary part of $\braket{e_i|e_j}$ respectively.

Now, consider $p_{0,a}$.  If $A$ sends $\ket{0}$, after evolution by $U$, the state becomes:
\[
\ket{0} \mapsto \ket{0,e_0} + \ket{1,e_1} = \ket{a}(\alpha\ket{e_0} + \ba\ket{e_1}) + \ket{\bar{a}}(\ba\ket{e_0} - \alpha\ket{e_1}),
\]
from which it is clear that $p_{0,a} = \alpha^2\bk{0}{0} + \ba^2\bk{1}{1} + 2\alpha\ba\re{0}{1}$.  Similarly, the following expressions are easily derived:

\begin{align}
p_{0,a} &= \alpha^2\bk{0}{0} + \bar{\alpha}^2\bk{1}{1} + 2\alpha\bar{\alpha}\re{0}{1}\label{eq:p0a}\\
&\Rightarrow \re{0}{1} = \frac{p_{0,a} - \alpha^2\bk{0}{0} - \bar{\alpha}^2\bk{1}{1}}{2\alpha\bar{\alpha}}\notag\\\notag\\
p_{1,a} &= \alpha^2\bk{2}{2} + \bar{\alpha}^2\bk{3}{3} + 2\alpha\bar{\alpha}\re{2}{3}\label{eq:p1a}\\
&\Rightarrow \re{2}{3} = \frac{p_{1,a} - \alpha^2\bk{2}{2} - \bar{\alpha}^2\bk{3}{3}}{2\alpha\bar{\alpha}}\notag\\\notag\\
p_{a,0} &=\alpha^2\bk{0}{0} + \bar{\alpha}^2\bk{2}{2} + 2\alpha\bar{\alpha}\re{0}{2}\label{eq:pa0}\\
&\Rightarrow \re{0}{2} = \frac{p_{a,0} - \alpha^2\bk{0}{0} - \bar{\alpha}^2\bk{2}{2}}{2\alpha\bar{\alpha}}\notag\\\notag\\
p_{a,\bar{a}} &=\alpha^2\bar{\alpha}^2(\bk{0}{0} + \bk{3}{3}) + \bar{\alpha}^4\bk{2}{2} + \alpha^4\bk{1}{1}\label{eq:paa}\\
&+ 2\bar{\alpha}^3\alpha \re{0}{2} - 2\bar{\alpha}\alpha^3 \re{0}{1} - 2\alpha^2\bar{\alpha}^2 \re{0}{3}\notag\\
&- 2\alpha^2\bar{\alpha}^2 \re{1}{2} - 2\alpha\bar{\alpha}^3 \re{2}{3} + 2\alpha^3\bar{\alpha}\re{1}{3}\notag
\end{align}

\begin{align}
p_{0,b} &= \beta^2\bk{0}{0} + \bar{\beta}^2\bk{1}{1} + 2\beta\bar{\beta}\im{0}{1}\label{eq:p0b}\\
&\Rightarrow \im{0}{1} = \frac{p_{0,b} - \beta^2\bk{0}{0} - \bar{\beta}^2\bk{1}{1}}{2\beta\bar{\beta}}\notag\\\notag\\
p_{1,b} &= \beta^2\bk{2}{2} + \bar{\beta}^2\bk{3}{3} + 2\beta\bar{\beta}\im{2}{3}\label{eq:p1b}\\
&\Rightarrow \im{2}{3} = \frac{p_{1,b} - \beta^2\bk{2}{2} - \bar{\beta}^2\bk{3}{3}}{2\beta\bar{\beta}}\notag\\\notag\\
p_{b,0} &= \beta^2\bk{0}{0} + \bar{\beta}^2\bk{2}{2} - 2\beta\bar{\beta}\im{0}{2}\label{eq:pb0}\\
&\Rightarrow \im{0}{2} = \frac{\beta^2\bk{0}{0} + \bar{\beta}^2\bk{2}{2} - p_{b,0}}{2\beta\bar{\beta}}\notag\\\notag\\
p_{b,\bar{b}} &= \beta^2\bar{\beta}^2(\bk{0}{0} + \bk{3}{3}) + \bar{\beta}^4\bk{2}{2} + \beta^4\bk{1}{1}\\
&- 2\bar{\beta}^3\beta \im{0}{2} - 2\bar{\beta}\beta^3 \im{0}{1} - 2\beta^2\bar{\beta}^2 \re{0}{3}\notag\\
&+ 2\beta^2\bar{\beta}^2 \re{1}{2} - 2\beta\bar{\beta}^3 \im{2}{3} - 2\beta^3\bar{\beta}\im{1}{3}\notag
\end{align}

As already mentioned, if $Z \subset \Psi$, then each $\bk{i}{i}$ can be estimated by $A$ and $B$.  If $\Psi = \Psi_3^\alpha$, then $p_{0,a}, p_{1,a}, p_{a,0},$ and $p_{a,\bar{a}}$ are statistics that are observable by $A$ and $B$.  If $0 < \alpha < 1$, then this provides $A$ and $B$ with estimates of the quantities $\re{0}{1}$, $\re{2}{3}$, and $\re{0}{2}$.  Since $U$ is unitary and so $\bk{0}{2} = -\bk{1}{3}$, this also provides $A$ and $B$ with an estimate of $\re{1}{3}$ (i.e., $\re{1}{3} = -\re{0}{2}$).  Finally, using these, along with the Cauchy-Schwarz inequality to bound $-\sqrt{\bk{1}{1}\bk{2}{2}} \le \re{1}{2} \le \sqrt{\bk{1}{1}\bk{2}{2}}$ (both $\bk{1}{1}$ and $\bk{2}{2}$ should be small as they represent the $Z$ basis error rate), $A$ and $B$ may bound $\re{0}{3}$ using $p_{a,\bar{a}}$.  We will show a particular example shortly when we consider a symmetric attack.

If $\Psi = \Psi_4^{\alpha,\beta}$, then $A$ and $B$ may estimate all of the above, in addition to $p_{0,b}, p_{1,b}, p_{b,0}$, and $p_{b,\bar{b}}$.  This, combined with the above, allows $A$ and $B$ to estimate: $\bk{0}{1}, \bk{2}{3}, \bk{0}{2}, \bk{1}{3}$ (i.e., they may estimate both the real parts - using $\mathcal{A}$ - and the imaginary parts - using $\mathcal{B}$).

Finally, the quantity $p_{b,\bar{b}}$, together with $p_{a,\bar{a}}$ may be used to compute a better estimate of $\re{1}{2}$ and $\re{0}{3}$.  To illustrate, assume $\alpha = \beta = 1/\sqrt{2}$ (other values may be used, though the algebra becomes tedious).  Then, the following identities are easy to derive:
\begin{align}
p_{a,\bar{a}} + p_{b, \bar{b}} &= 1 - \frac{1}{2}(\re{0}{1} + \im{0}{1} + \re{2}{3} + \im{2}{3}) - \re{0}{3}\notag\\
\Rightarrow \re{0}{3} &= 1 - p_{a,\bar{a}} - p_{b,\bar{b}} - \frac{1}{2}(\re{0}{1} + \im{0}{1} + \re{2}{3} + \im{2}{3}).\label{eq:r03-gen}\\\notag\\
p_{a,\bar{a}} - p_{b,\bar{b}} &= \frac{1}{2}(\im{0}{1} - \re{0}{1} + \im{2}{3} - \re{2}{3}) - \re{1}{2}\notag\\
\Rightarrow \re{1}{2} &= p_{b,\bar{b}} - p_{a,\bar{a}} + \frac{1}{2}(\im{0}{1} - \re{0}{1} + \im{2}{3} - \re{2}{3})\label{eq:r12-gen}
\end{align}
(Note, in the above, we made use of the fact that $\re{0}{2} + \re{1}{3} = 0$ and $\im{0}{2} + \im{1}{3} = 0$ due to unitarity of $U$.)

\subsection{Symmetric Channels}\label{section:sym}
The above analysis applies to any arbitrary channel.  However, to illustrate the use of the above expressions, let us consider a special case that $E$'s attack operator $U$ is symmetric; in particular, $E$'s attack is such that its action on $\mathcal{H}_T$ may be modeled as a depolarization channel with parameter $Q$.  That is, for any two-dimensional density operator $\rho$, we have, after $E$'s attack:
\[
\mathcal{E}_Q(\rho) = (1-2Q)\rho + QI,
\]
where $I$ is the two-dimensional identity operator.  This assumption is common in the QKD literature.  However, we comment that, in the following, we will only utilize statistics that $A$ and $B$ are capable of observing directly; our use of a depolarization channel is simply to demonstrate our technique and to put concrete numbers to these many parameters.  Furthermore, since only observable statistics are used, this depolarization channel assumption is entirely enforceable by $A$ and $B$.

Let us first consider $\Psi_3^\alpha$ (the same set that was considered in our conference paper \cite{QKD-Tom-threestate-Krawec}), with $\alpha \in (0,1)$.  If $A$ sends a $Z$ state $\ket{i}$ for $i \in \{0,1\}$, then the qubit's state when it arrives at $B$'s lab is:
\[
\mathcal{E}_Q(\ketbra{i}) = (1-2Q)\ketbra{i} + QI.
\]
From this, it is clear that $p_{0,1} = p_{1,0} = Q$.  Thus $\bk{0}{0} = \bk{3}{3} = 1-Q$ and $\bk{1}{1} = \bk{2}{2} = Q$.

Now, if $A$ sends $\ket{0}$, it is clear that $p_{0,a} = (1-2Q)\alpha^2 + Q\alpha^2 + Q\bar{\alpha}^2 = (1-Q)\alpha^2 + Q\bar{\alpha}^2$.  From this, along with Equation \ref{eq:p0a}, we conclude $\re{0}{1} = 0$.  (The reader will observe that, when $\alpha = 1/\sqrt{2}$, and so $\ket{a} = \ket{+}$, we have $p_{0,a} = p_{0,+} = 1/2$ as expected.)  Similarly, we conclude that $\re{0}{2} = \re{2}{3} = \re{1}{3} = 0$.

Substituting this into Equation \ref{eq:paa}, and solving for $\re{0}{3}$, we find:
\begin{align}
\re{0}{3} &= \frac{2\alpha^2\bar{\alpha}^2(1-Q) + (\bar{\alpha}^4+\alpha^4)Q - p_{a,\bar{a}} - 2\alpha^2\bar{\alpha}^2\re{1}{2}}{2\alpha^2\bar{\alpha}^2}\notag\\
&=1 - 2Q + \frac{Q - p_{a,\bar{a}}}{2\alpha^2\bar{\alpha}^2} - \re{1}{2},
\end{align}
where we used the fact that $1 = (\alpha^2 + \bar{\alpha}^2)^2 = \alpha^4 + 2\alpha^2\bar{\alpha}^2 + \bar{\alpha}^4$ and so $\alpha^4 + \bar{\alpha}^4 = 1 - 2\alpha^2\bar{\alpha}^2$.

If $Q = p_{a,\bar{a}}$ (which it would be in this depolarization channel scenario), then the above simplifies to:
\[
\re{0}{3} = 1 - 2Q - \re{1}{2},
\]
an expression with no dependence on $\alpha$ (other than $\alpha \ne 0,1$).  By the Cauchy-Schwarz inequality, we have $\re{1}{2} \in [-Q, Q]$ thus implying $\re{0}{3} \in [1-3Q, 1-Q]$.

Now, if $\Psi = \Psi_4^{\alpha,\beta}$, for $\alpha,\beta \in (0,1)$, then, in addition to the above, we also find $\im{0}{1} = \im{2}{3} = \im{0}{2} = \im{1}{3} = 0$ (using a similar method as described above, but with statistics $p_{0,b}$, $p_{1,b}$, and $p_{b,0}$).  By realizing that $p_{a,\bar{a}} = p_{b,\bar{b}}$, we also have $\re{1}{2} = 0$.  Thus, in this setting, it holds that $\re{0}{3} = 1-2Q$.

\subsection{Concerning the Parameter $\alpha$ and $\beta$}

Rather interestingly, at least in the asymptotic scenario, the choice of parameters for $\alpha$ and $\beta$ do not alter the parameter estimation as proven in \cite{QKD-Tom-threestate1,QKD-Tom-BB84NarrowAngle}.  Below, we include a sketch of the argument, using our notation, only for completeness of this paper.

Denote by $\re{i}{j}$ to be, as before, $\re{i}{j} = Re\braket{e_i|e_j}$ (similarly $\im{i}{j}$).  This is the actual value of the inner product of these two states.  Denote by $\ree{i}{j}$ and $\ime{i}{j}$ to be $A$ and $B$'s estimate of these quantities after parameter estimation.  We now determine how the choice of $\alpha$ and $\beta$ affect these estimates, assuming we have perfect parameter estimation of the required $p_{i,j}$ quantities (in the asymptotic scenario, which we focus on in this paper, this is not a problem; we leave the non-perfect case as future study).

Clearly, from Equations \ref{eq:p0a}, \ref{eq:p1a}, and \ref{eq:pa0}, we have $\re{0}{1} = \ree{0}{1}$, $\re{2}{3} = \ree{2}{3}$, $\re{0}{2} = \ree{0}{2}$ and $\re{1}{3} = \ree{1}{3}$.  From Equations \ref{eq:p0b}, \ref{eq:p1b}, and \ref{eq:pb0}, we get the same result for the imaginary parts of these quantities.  The choice of $\alpha$ and $\beta$, therefore, do not affect these estimates.

What remains is to consider how these two parameters affect $\ree{0}{3}$ and $\ree{1}{2}$.  Using Equation \ref{eq:paa}, $A$ and $B$ will estimate these two quantities using the identity:
\begin{align*}
2\alpha^2\bar{\alpha}^2\ree{0}{3} + 2\alpha^2\bar{\alpha}^2\ree{1}{2} &=\alpha^2\bar{\alpha}^2(\bk{0}{0} + \bk{3}{3}) + \bar{\alpha}^4\bk{2}{2} + \alpha^4\bk{1}{1}\\
&+ 2\bar{\alpha}^3\alpha \ree{0}{2} - 2\bar{\alpha}\alpha^3 \ree{0}{1} - 2\alpha\bar{\alpha}^3 \ree{2}{3} + 2\alpha^3\bar{\alpha}\ree{1}{3}\\
&- p_{a,\bar{a}}
\end{align*}

Using the fact that the estimates for all real parts on the right-hand side of the above expression are, indeed, the actual values, the above implies (substituting in $p_{a, \bar{a}}$ for the actual values $\re{i}{j}$):
\begin{align*}
&2\alpha^2\bar{\alpha}^2(\ree{0}{3} + \ree{1}{2}) = 2\alpha^2\bar{\alpha^2}(\re{0}{3} + \re{1}{2})\\
\iff & \ree{0}{3} + \ree{1}{2} = \re{0}{3} + \re{1}{2},
\end{align*}
from which we see there is no dependence on $\alpha$.  Of course this parameter may affect the key rate for certain protocols (as we see later), it does not affect the parameter estimation process assuming perfect estimates of $p_{i,j}$.

When we consider also the $\mathcal{B}$ basis, we additionally get the following identity (using $p_{b,\bar{b}}$):
\[
\ree{0}{3} - \ree{1}{2} = \re{0}{3} - \re{1}{2}.
\]

Combining this with the above, we find $\ree{0}{3} = \re{0}{3}$ and $\ree{1}{2} = \re{1}{2}$.  Thus, in this case, we get exact estimates of the actual values of $E$'s attack which, again, do not depend on $\alpha$ or $\beta$.

The above all assumed perfect accuracy in our estimation of the quantities $p_{i,j}$.  In a realistic scenario, this will not be the case.  In such a setting, the choice of $\alpha$ and $\beta$ will no doubt alter the estimation of the quantities $\re{i}{j}$ and $\im{i}{j}$.  We leave this study as future work.

\section{Applications}

In this section, we will apply the above estimation method to analyze three different QKD protocols.  First, we will analyze the Extended B92 protocol first introduced in \cite{QKD-B92-extended}.  After this, we will consider an ``optimized'' QKD protocol.  Finally, we will apply our technique to a protocol relying on a two-way quantum channel (thus allowing the attacker two opportunities to interact with the qubit each iteration).  For all these protocols, we will derive key-rate expressions as functions of the various $\re{i}{j}$ quantities discussed in the last section from which Protocol \ref{alg:PE} may be used to discover bounds on.

\subsection{QKD Security}

A QKD protocol operates by first performing several iterations of a \emph{quantum communication stage}, the result of which is an $N$-bit \emph{raw key}: a bit-string that is partially secret and partially correlated ($A$ and $B$ each have their own version of the raw key).  Following this, assuming $A$ and $B$ do not abort, they perform an error correction protocol and a privacy amplification protocol (see \cite{QKD-survey}), resulting in a secret key of size $\ell(N)$-bits (possibly $\ell(N) = 0$ if the noise rate is too high - i.e., if $E$ has potentially too much information on the raw key).  The goal is to determine $\ell(N)$ as a function of the noise rate (in our case, as functions of the various $p_{i,j}$ values considered in the previous section).  Actually, we are interested in computing a protocol's \emph{key rate} in the \emph{asymptotic scenario}, defined to be:
\begin{equation}
\text{key-rate} = r = \lim_{N\rightarrow\infty} \frac{\ell(N)}{N}.
\end{equation}

We will first consider collective attacks: these are attacks where $E$ applies the same attack operation each iteration of the protocol.  Thus each iteration is independent of all others.  Later we will consider \emph{general attacks} where no such restrictions are placed on the adversary.

Assuming collective attacks, it was shown in \cite{QKD-winter-keyrate,QKD-renner-keyrate} that the key rate is:
\[
r = \inf(S(A|E) - H(A|B)),
\]
where the infimum is over all collective attacks which induce the observed statistics.  The computation of $H(A|B)$ is trivial given values $p_{i,j}$ for $i,j \in \{0,1\}$.  The challenge is to compute $S(A|E)$.

We will use the following lemma and theorem to compute the conditional von Neumann entropy $S(A|E)$ of the quantum systems which will appear later in our analysis of these various QKD protocols.

\begin{lemma}\label{lemma:entropy-computation}
Let $\mathcal{H} = \mathcal{H}_X\otimes\mathcal{H}_Y$ be a finite dimensional Hilbert space with $\{\ket{1}_X, \ket{2}_X, \cdots, \ket{n}_X\}$ an orthonormal basis of $\mathcal{H}_X$.  Consider the following density operator:
\[
\rho = \sum_{i=1}^n p_i\kb{i}_X \otimes\sigma_Y^{(i)},
\]
where $\sum_i p_i = 1$, $p_i \ge 0$, and each $\sigma_Y^{(i)}$ a Hermitian positive semi-definite operator of unit trace acting on $\mathcal{H}_Y$.  Then the von Neumann entropy of $\rho$ is:
\begin{equation}
S(\rho) = H(p_1, p_2, \cdots, p_n) + \sum_{i=1}^np_iS\left(\sigma_Y^{(i)}\right).
\end{equation}
\end{lemma}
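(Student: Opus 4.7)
The plan is to exploit the block-diagonal structure of $\rho$ with respect to the orthonormal basis $\{\ket{i}_X\}$. Because the projectors $\kb{i}_X$ are mutually orthogonal, each summand $p_i\kb{i}_X\otimes\sigma_Y^{(i)}$ acts on a distinct orthogonal subspace $\text{span}(\ket{i}_X)\otimes \mathcal{H}_Y$ of $\mathcal{H}_X\otimes\mathcal{H}_Y$. Consequently, the spectrum of $\rho$ is the disjoint union of the spectra of the blocks $p_i\sigma_Y^{(i)}$.

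First I would spectrally decompose each $\sigma_Y^{(i)}$: since it is Hermitian and positive semi-definite, write $\sigma_Y^{(i)} = \sum_j \lambda_j^{(i)} \kb{\phi_j^{(i)}}_Y$ with $\lambda_j^{(i)}\geq 0$ and $\sum_j\lambda_j^{(i)}=1$ (as $\sigma_Y^{(i)}$ has unit trace). Then the vectors $\{\ket{i}_X\otimes\ket{\phi_j^{(i)}}_Y\}_{i,j}$ are pairwise orthogonal (orthogonality across different $i$ is automatic from $\braket{i|i'}_X = \delta_{ii'}$, and within a fixed $i$ it follows from the eigenvector orthogonality of $\sigma_Y^{(i)}$). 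A direct computation shows these are eigenvectors of $\rho$ with eigenvalues $p_i\lambda_j^{(i)}$, which together exhaust the spectrum.

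Applying the definition $S(\rho) = -\sum \mu\log\mu$ over this spectrum then gives
\begin{align*}
S(\rho) &= -\sum_{i,j} p_i\lambda_j^{(i)}\log\left(p_i\lambda_j^{(i)}\right)\\
&= -\sum_i p_i \log p_i \sum_j \lambda_j^{(i)} - \sum_i p_i \sum_j \lambda_j^{(i)}\log\lambda_j^{(i)}.
\end{align*}
Using $\sum_j \lambda_j^{(i)}=1$ in the first term recovers $H(p_1,\dots,p_n)$, while the inner sum in the second term is exactly $-S(\sigma_Y^{(i)})$ for each $i$, yielding the desired identity.

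There is no real obstacle; the only subtlety to flag carefully in the write-up is the convention $0\log 0 = 0$ when handling $p_i=0$ or zero eigenvalues $\lambda_j^{(i)}$, which keeps every step well-defined. Alternatively, one could present the argument entirely without choosing an explicit eigenbasis by invoking the joint-entropy formula for classical-quantum (cq) states, but the direct diagonalization above is the most transparent and self-contained route given the tools already in the paper.
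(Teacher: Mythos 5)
Your proof is correct: the block-diagonal structure of $\rho$ with respect to $\{\ket{i}_X\}$ does give eigenvalues $p_i\lambda_j^{(i)}$, and the entropy computation (with the $0\log 0 = 0$ convention you rightly flag) yields the claimed identity. The paper itself does not prove this lemma — it defers to a citation, calling the result straightforward — and your direct diagonalization is exactly the standard argument that fills that gap, so there is nothing to reconcile between the two.
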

\begin{proof}
The proof of this is straight-forward.  See, for instance, \cite{SQKD-Krawec-SecurityProof}.
\end{proof}

\begin{theorem}\label{thm:entropy}
Let $\mathcal{H}_A\otimes\mathcal{H}_E$ be a finite dimensional Hilbert space.  Consider the following density operator:
\begin{equation}\label{eq:thm:1}
\rho_{AE} = \frac{1}{N}\left( \kb{0}_A \otimes \left[ \sum_{i=1}^M \kb{g_i^0} \right] + \kb{1}_A \otimes \left[\sum_{i=1}^M\kb{g_i^1}\right]\right),
\end{equation}
where $N > 0$ is a normalization term, $M < \infty$ (note, this $M$ has no relation to the $M$ from the parameter estimation procedure described last section), and each $\ket{g_i^j} \in \mathcal{H}_E$ (these are not necessarily normalized, nor orthogonal, states; also it might be that $\ket{g_i^j} \equiv 0$ for some $i$ and $j$).  Let $N_i^j = \braket{g_i^j|g_i^j} \ge 0$.  Then:
\begin{equation}\label{eq:thm:entropy}
S(A|E)_\rho \ge \sum_{i=1}^M \left( \frac{N_i^0+N_i^1}{N} \right) \cdot S_i,
\end{equation}
where:
\begin{equation}\label{eq:thm:cond-entropy}
S_i = \left\{ \begin{array}{ll}
h\left( \frac{N_i^0}{N_i^0+N_i^1} \right) - h\left(\lambda_i\right) & \text{ if } N_i^0 > 0 \text{ and } N_i^1 > 0\\\\
0 & \text{ otherwise}
\end{array}\right.
\end{equation}
and:
\begin{equation}\label{eq:thm:lambda}
\lambda_i = \frac{1}{2} + \frac{ \sqrt{(N_i^0-N_i^1)^2 + 4Re^2\braket{g_i^0|g_i^1}} }{2(N_i^0+N_i^1)}.
\end{equation}
\end{theorem}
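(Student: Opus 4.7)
The plan is to introduce an auxiliary classical register $C$ of dimension $M$ that records the block index $i$ appearing in Equation~\ref{eq:thm:1}, so that $\rho_{AEC} = \frac{1}{N}\sum_{i,j} \kb{j}_A \otimes \kb{g_i^j}_E \otimes \kb{i}_C$ traces over $C$ to $\rho_{AE}$. Since conditioning on additional information can only decrease conditional entropy (a standard consequence of strong subadditivity), $S(A|E) \geq S(A|EC)$, and because $C$ is classical in the basis $\{\ket{i}\}$, the right-hand side splits as $S(A|EC) = \sum_i \frac{N_i^0+N_i^1}{N}\, S(A|E)_{\sigma^{(i)}}$, where $\sigma^{(i)}$ is the normalized conditional $AE$-state given $C=i$ (summands with $N_i^0+N_i^1=0$ are omitted). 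This already reproduces the prefactor in Equation~\ref{eq:thm:entropy}, so the task reduces to showing $S(A|E)_{\sigma^{(i)}} \geq S_i$ for each surviving block.

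For blocks with $N_i^0=0$ or $N_i^1=0$, the $A$-register is deterministic given $C=i$, so $S(A|E)_{\sigma^{(i)}}=0$, matching $S_i$. For blocks with both weights positive, I would set $p_i^j = N_i^j/(N_i^0+N_i^1)$ and $\ket{\tilde g_i^j} = \ket{g_i^j}/\sqrt{N_i^j}$, giving $\sigma^{(i)} = p_i^0\kb{0}_A\otimes\kb{\tilde g_i^0} + p_i^1\kb{1}_A\otimes\kb{\tilde g_i^1}$. Lemma~\ref{lemma:entropy-computation} then immediately yields $S(AE)_{\sigma^{(i)}} = h(p_i^0)$, which is the first term of $S_i$. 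To control $S(E)_{\sigma^{(i)}}$, note that $\sigma_E^{(i)}$ has rank at most two, so I would diagonalize the $2\times 2$ matrix representing it in an orthonormal basis of $\mathrm{span}\{\ket{\tilde g_i^0},\ket{\tilde g_i^1}\}$. Using $\mathrm{tr}\,\sigma_E^{(i)}=1$ and $\det\sigma_E^{(i)} = p_i^0 p_i^1(1-|\braket{\tilde g_i^0|\tilde g_i^1}|^2)$, the larger eigenvalue works out to $\lambda_+ = \frac{1}{2} + \frac{1}{2(N_i^0+N_i^1)}\sqrt{(N_i^0-N_i^1)^2 + 4|\braket{g_i^0|g_i^1}|^2}$, i.e., Equation~\ref{eq:thm:lambda} but with the modulus in place of the real part.

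The main obstacle is closing this last gap in the correct direction. The elementary bound $|z|^2 \geq (\mathrm{Re}\, z)^2$ gives $\lambda_+ \geq \lambda_i$ with $\lambda_i$ as in Equation~\ref{eq:thm:lambda}; and because $\lambda_+ \geq \tfrac{1}{2}$ while $h$ is symmetric about $\tfrac{1}{2}$ and monotone on $[0,\tfrac{1}{2}]$, this translates into $S(E)_{\sigma^{(i)}} = h(\lambda_+) \leq h(\lambda_i)$. Subtracting from $S(AE)_{\sigma^{(i)}}$ then gives $S(A|E)_{\sigma^{(i)}} \geq h(p_i^0) - h(\lambda_i) = S_i$, finishing the argument. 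The subtlety lies entirely in verifying the direction of this binary-entropy step once the modulus is replaced by the real part; everything else is Lemma~\ref{lemma:entropy-computation} plus the classical-register trick.
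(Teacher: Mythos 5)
Your proposal is correct and follows essentially the same route as the paper's proof: the same classical-register extension $\rho_{AEC}$ with $S(A|E)\ge S(A|EC)$ via strong subadditivity, the same block decomposition with $S(AE)_{\sigma^{(i)}}=h(p_i^0)$ from Lemma \ref{lemma:entropy-computation}, the same eigenvalue formula for the rank-two operator $\sigma_E^{(i)}$, and the same final step using $Re^2\braket{g_i^0|g_i^1}\le|\braket{g_i^0|g_i^1}|^2$ together with monotonicity of $h$ on $[1/2,1]$. The only (immaterial) difference is that you obtain $\lambda_\pm$ from the trace and determinant of $\sigma_E^{(i)}$, whereas the paper computes the same eigenvalues by writing $\ket{g_i^0},\ket{g_i^1}$ explicitly in an orthonormal basis $\{\ket{E},\ket{I}\}$.
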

\begin{proof}
Note that $N_i^j = 0$ if and only if $\ket{g_i^j} \equiv 0$.  Also note that, since $\rho_{AE}$ is a density operator, and thus unit trace, it must be that $N = \sum_{i,j} N_i^j$.  Assume, first, that all $N_i^j > 0$.  We may write $\rho_{AE}$ in the form:
\begin{equation}\label{eq:thm:rho-exp}
\rho_{AE} = \sum_{i=1}^M \frac{N_i^0+N_i^1}{N} \sigma_{AE}^{(i)},
\end{equation}
where:
\[
\sigma_{AE}^{(i)} = \frac{N_i^0}{N_i^0+N_i^1} \kb{0}_A \otimes \frac{\kb{g_i^0}}{N_i^0} + \frac{N_i^1}{N_i^0+N_i^1} \kb{1}_A \otimes \frac{\kb{g_i^1}}{N_i^1}.
\]

Note that each $\sigma_{AE}^{(i)}$ is a Hermitian positive semi-definite operator of unit trace.

Let $\mathcal{H}_C$ be the $M$-dimensional Hilbert space spanned by orthonormal basis $\{\ket{\chi_1}, \ket{\chi_2}, \cdots, \ket{\chi_M}\}$, and let $\tau_{AEC}$ be the following density operator:
\[
\tau_{AEC} = \sum_{i=1}^M \frac{N_i^0+N_i^1}{N} \kb{\chi_i} \otimes \sigma_{AE}^{(i)}.
\]

Clearly, $\tau_{AE} = tr_C\tau_{AEC} = \rho_{AE}$.  Due to the strong sub additivity of von Neumann entropy, we have:
\[
S(A|E)_\rho = S(A|E)_\tau \ge S(A|EC)_\tau.
\]

By definition of conditional entropy, we have $S(A|EC)_\tau = S(AEC)_\tau - S(EC)_{\tau}$.  Applying Lemma \ref{lemma:entropy-computation} twice, we have:
\begin{align*}
S(AEC)_\tau &= H\left( \left\{ \frac{N_i^0 + N_i^1}{N} \right\}_{i=1}^M \right) + \sum_{i=1}^M \left(\frac{N_i^0+N_i^1}{N}\right) S(AE)_{\sigma^{(i)}}\\\\
S(EC)_\tau &= H\left( \left\{ \frac{N_i^0 + N_i^1}{N} \right\}_{i=1}^M \right) + \sum_{i=1}^M \left(\frac{N_i^0+N_i^1}{N}\right) S(E)_{\sigma^{(i)}}.
\end{align*}

From which it is clear that:
\[
S(A|EC)_\tau = \sum_{i=1}^M \left(\frac{N_i^0+N_i^1}{N}\right)S(A|E)_{\sigma^{(i)}}.
\]

We now compute $S(A|E)_{\sigma^{(i)}}$.  It is easy to see that:
\[
S(AE)_{\sigma^{(i)}} = H\left( \frac{N_i^0}{N_i^0+N_i^1}, \frac{N_i^1}{N_i^0+N_i^1}\right) = h\left(\frac{N_i^0}{N_i^0+N_i^1}\right).
\]
Thus, we need only to compute $S(E)_{\sigma^{(i)}}$.  Without loss of generality, we may write:
\begin{align*}
\ket{g_i^0} = x\ket{E} && \ket{g_i^1} = he^{i\theta}\ket{E} + d\ket{I},
\end{align*}
where $x, h, d \in \mathbb{R}$, $\braket{E|E} = \braket{I|I} = 1$, and $\braket{E|I} = 0$.  We also have the following identities:
\begin{align}
x^2 = \braket{g_i^0|g_i^0} = N_i^0\\
h^2+d^2 = \braket{g_i^1|g_i^1} = N_i^1\\\notag\\
xhe^{i\theta} = \braket{g_i^0|g_i^1} \Longrightarrow h^2 = \frac{|\braket{g_i^0|g_i^1}|^2}{N_i^0}\label{eq:ident-ev-3}
\end{align}

Writing $\sigma_E^{(i)}$ in this $\{\ket{E}, \ket{I}\}$ basis, we find:
\[
\sigma_E^{(i)} = \frac{1}{N_i^0+N_i^1}\left( \begin{array}{ccc}
x^2 + h^2 &,& he^{i\theta}d\\\\
he^{-i\theta}d &,& d^2 \end{array}\right),
\]
the eigenvalues of which are:
\[
\lambda_\pm = \frac{1}{2(N_i^0+N_i^1)} \left( N_i^0+N_i^1 \pm \sqrt{ (x^2+h^2-d^2)^2 + 4h^2d^2 } \right).
\]

Since $d^2 = N_i^1 - h^2$, and letting $\Delta = N_i^0-N_i^1$, we find:
\begin{align*}
\lambda_\pm &= \frac{1}{2(N_i^0+N_i^1)} \left( N_i^0+N_i^1 \pm \sqrt{ (\Delta+2h^2)^2 + 4h^2(N_i^1-h^2) } \right)\\
&= \frac{1}{2(N_i^0+N_i^1)} \left( N_i^0+N_i^1 \pm \sqrt{ \Delta^2 + 4h^2(\Delta+N_i^1) } \right)\\\\
&= \frac{1}{2} \pm \frac{\sqrt{ \Delta^2 + 4|\braket{g_i^0|g_i^1}|^2}}{2(N_i^0+N_i^1)},
\end{align*}
where, for the last equality, we used Identity \ref{eq:ident-ev-3}.

Thus, we have $S(E)_{\sigma^{(i)}} = H(\lambda_+, \lambda_-) = h(\lambda_+)$.  To complete this case of the proof, observe that $\lambda_+ \ge 1/2$ and that, as $\lambda_+$ increases (in particular, as $|\braket{g_i^0|g_i^1}|^2$ increases), $S(E)_{\sigma^{(i)}}$ decreases.  The goal of this theorem is to lower-bound the quantity $S(A|E) = S(AE)-S(E)$.  Thus, we upper-bound $S(E)$.

Since $|\braket{g_i^0|g_i^1}|^2 = Re^2\braket{g_i^0|g_i^1} + Im^2\braket{g_i^0|g_i^1} \ge Re^2\braket{g_i^0|g_i^1}$, if we define $\lambda_i$ as given in the theorem statement (i.e., Equation \ref{eq:thm:lambda}), then:
\[
\frac{1}{2} \le \lambda_i \le \lambda_+ \Longrightarrow S(E)_{\sigma^{(i)}} = h(\lambda_+) \le h(\lambda_i).
\]
Therefore, we may conclude:
\[
S(A|E)_{\sigma^{(i)}} \ge h\left(\frac{N_i^0}{N_i^0+N_i^1}\right) - h(\lambda_i),
\]
Setting $S_i$ equal to the above expression completes the proof for the case $N_i^j > 0$ for all $i$ and $j$.

If $N_i^0 = N_i^1 = 0$ for some $i$, it holds that $\ket{g_i^0} \equiv \ket{g_i^1} \equiv 0$.  Thus, this term does not appear in $\rho_{AE}$ and so does not contribute to the entropy computation, thus justifying setting $S_i = 0$ in this case.

If $N_i^0 > 0$ and $N_i^1 = 0$, then $\ket{g_i^1} \equiv 0$ and so does not show up in $\rho_{AE}$.  Thus, the $i$'th term in our decomposition of $\rho_{AE}$ given by Equation \ref{eq:thm:rho-exp} may be written:
\[
\sigma_{AE}^{(i)} = \kb{0}_A \otimes \frac{\kb{g_i^0}}{N_i^0}.
\]
The conditional entropy of such a system is simply $0$.  The case when $N_i^0 = 0$ and $N_i^1 > 0$ is similar, thus completing the proof.

\end{proof}

Note that the decomposition of a density matrix to the form in Equation \ref{eq:thm:1} may not be unique (indeed, one may permute the various $\ket{g_i^j}$ states).  Different decompositions lead to potentially different, though correct, lower-bounds on $S(A|E)$.  When using the above theorem to prove security of a QKD protocol, one must choose the correct ordering so as to produce the highest possible lower-bound (thus improving the key-rate expression).

\subsection{Extended B92}

In this section, we apply our method to the analysis of the Extended B92 protocol introduced in \cite{QKD-B92-extended}.  This protocol, like the standard B92 \cite{QKD-B92}, uses two non-orthogonal states to encode the raw key bits, however it extends the protocol by allowing $A$ to send other qubit states (beyond the two only allowed by standard B92) for parameter estimation purposes.  We will use our technique to derive more optimistic noise tolerances for this protocol than prior work in \cite{QKD-B92-extended}.

We begin by introducing the protocol using our terminology.  Denote by $\Psi$-B92 the protocol shown in Protocol \ref{alg:B92}.

\begin{algorithm}
\caption{$\Psi$-B92}\label{alg:B92}
\textbf{Input}:
Let $\Psi \subset Z \cup \mathcal{A}_\alpha \cup \mathcal{B}_\beta$ be the set of possible states that $A$ may send to $B$ under the restrictions that $\ket{0}, \ket{a} \in \Psi$.  If $\alpha = 0$ or $1$, then we assume a state $\ket{+} \in \Psi$.

\textbf{Quantum Communication Stage}:
The quantum communication stage of the protocol repeats the following process:
\begin{enumerate}
  \item $A$ will send a qubit state $\ket{\psi} \in \Psi$, choosing randomly according to some publicly known distribution (we assume $\ket{0}$ and $\ket{a}$ are chosen with equal probability and that all states in $\Psi$ have non-zero probability of being chosen).
  \item $B$ chooses a random basis and measures the qubit in this basis.
  \item If $A$ chose to send $\ket{0}$ or $\ket{a}$, she sets her raw key bit to be $0$ or $1$ respectively.
  \item If $B$ observes a $\ket{\bar{a}}$ or $\ket{1}$, he sets his key bit to be $0$ or $1$ respectively.
  \item $A$ and $B$ announce, over the authenticated classical channel, whether this is a \emph{successful} iteration: namely, whether $A$ choose $\ket{0}$ or $\ket{a}$ and whether $B$ observed $\ket{\bar{a}}$ or $\ket{1}$ (of course they do not disclose their actual preparations or observations).  All other iterations, along with a suitable, randomly chosen, subset of successful iterations, are used for parameter estimation as described in Section \ref{section:PE}.
\end{enumerate}
\end{algorithm}

To compute the key rate of this Extended B92 protocol, we must first describe the joint quantum system held by $A$, $B$, and $E$ conditioning on ``successful'' iterations.  Recall that $\bar{\alpha} = \sqrt{1-\alpha^2}$ (thus $\ket{a} = \alpha\ket{0} + \bar{\alpha}\ket{1}$).  We also define $P(z) = zz^*$, where $z^*$ is the conjugate transpose of $z$.  It is not difficult to show that this quantum system is:
\begin{align*}
\rho_{ABE} &= \frac{1}{N'}\left[\frac{1}{2}\kb{00}_{AB} \otimes P(\bar{\alpha}\ket{e_0} - \alpha\ket{e_1}) + \frac{1}{2}\kb{11}_{AB} \otimes P(\bar{\alpha} \ket{f_0} - \alpha\ket{f_1})\right]\\
&+ \frac{1}{N'} \left[ \frac{1}{2}\kb{01}_{AB} \otimes \kb{e_1} + \frac{1}{2}\kb{10}_{AB} \otimes \kb{f_1}\right],
\end{align*}
where we have adopted the same notation for $E$'s attack as in Equations \ref{eq:U-states} and \ref{eq:U-states-f}, and $N'$ is a normalization term to be discussed shortly.

Tracing out $B$'s system and using the fact that, from Equation \ref{eq:U-states-f2}, we find $\bar{\alpha}\ket{f_0} - \alpha\ket{f_1} = \alpha\ket{e_1} + \bar{\alpha}\ket{e_3}$, we have:
\begin{align*}
\rho_{AE} &= \frac{1}{N'}\left[ \frac{1}{2} \kb{0}_A \otimes P(\bar{\alpha}\ket{e_0} - \alpha\ket{e_1}) + \frac{1}{2}\kb{1}_A \otimes P(\alpha\ket{e_1} + \bar{\alpha}\ket{e_3})\right]\\
&+\frac{1}{N'}\left[ \frac{1}{2}\kb{0}_A \otimes \kb{e_1} + \frac{1}{2}\kb{1}_A\otimes\kb{f_1}\right].
\end{align*}

In order to apply Theorem \ref{thm:entropy}, we write the above state in the following form:
\begin{align*}
\rho_{AE} &= \frac{1}{2N'} \left( \kb{0}_A \otimes (\kb{g_1^0} + \kb{g_2^0}) + \kb{1}_A\otimes(\kb{g_1^1} + \kb{g_2^1})\right)
\end{align*}
where we defined:
\begin{align*}
&\ket{g_1^0} = \bar{\alpha}\ket{e_0} - \alpha\ket{e_1}\\
&\ket{g_1^1} = \alpha\ket{e_1} + \bar{\alpha}\ket{e_3}\\
&\ket{g_2^0} = \ket{e_1}\\
&\ket{g_2^1} = \ket{f_1}\\
&N' = \frac{1}{2}(\braket{g_1^0|g_1^0} + \braket{g_2^0|g_2^0} + \braket{g_1^1|g_1^1} + \braket{g_2^1|g_2^1}).
\end{align*}

From this, Theorem \ref{thm:entropy} may be directly applied to compute the key rate of this Extended B92 protocol, providing us with a key-rate expression for any asymmetric channel (computing $H(A|B)$ is, as stated before, trivial).  We will show an example shortly; however, writing out the algebraic expression is not enlightening as it would entail simply copying Equation \ref{eq:thm:entropy} from Theorem \ref{thm:entropy}.  Instead, we will illustrate by writing out the case for a symmetric attack (again, an enforceable assumption) allowing us to take advantage of certain simplifications in the expressions.  Furthermore, it will allow us to compare with prior work to demonstrate the advantage to using mismatched measurement bases for this protocol.  We stress, however, that symmetry is not required at this point.

Let $Q$ denote the error rate of the channel as before.  Assuming a symmetric attack (and thus $\re{0}{1} = \re{0}{2} = \re{2}{3} = \re{1}{3} = 0$; see Section \ref{section:sym}), the normalization term simplifies to:

\begin{align}
N' &= \frac{1}{2}[\bar{\alpha}^2(1-Q) + \alpha^2Q + \alpha^2Q + \bar{\alpha}^2(1-Q) + Q + Q]\notag\\
&= \bar{\alpha}^2(1-Q) + \alpha^2Q + Q\notag\\
&= 1 - \alpha^2(1 - 2Q).\label{eq:b92-N}
\end{align}
where, to derive the last equality, we used the fact that $\bar{\alpha}^2 = 1-\alpha^2$.

Let $N_C = \braket{g_1^0|g_1^0} = \braket{g_1^1|g_1^1} = \bar{\alpha}^2(1-Q) + \alpha^2Q$ (the two are equal when faced with a symmetric attack) and $N_W = \braket{g_2^0|g_2^0} + \braket{g_2^1|g_2^1} = Q$.  By Theorem \ref{thm:entropy}, we have:
\[
S(A|E) \ge \frac{N_C}{N'}(1 - h(\lambda_C)) + \frac{N_W}{N'}(1 - h(\lambda_W)),
\]
where:
\begin{align*}
\lambda_C &= \frac{1}{2} + \frac{|Re\braket{g_1^0|g_1^1}|}{2N_C}\\
\lambda_W &= \frac{1}{2} + \frac{|Re\braket{g_2^0|g_2^1}|}{2N_W}.
\end{align*}

$Re\braket{g_2^0|g_2^1}$ may be evaluated directly:
\begin{equation}
Re\braket{g_2^0|g_2^1} = Re\braket{e_1|f_1} = -\alpha^2Q + \bar{\alpha}^2\re{1}{2}.
\end{equation}
(Again we use the notation $\re{i}{j}$ to mean $Re\braket{e_i|e_j}$.)

Furthermore, we have:
\begin{equation}
Re\braket{g_1^0|g_1^1} = \bar{\alpha}^2\re{0}{3} - \alpha^2Q.
\end{equation}
Bounds on $\re{0}{3}$ and $\re{1}{2}$ are found as described in Section \ref{section:sym}.

All that remains is to compute $H(A|B)$.  But this is simply:
\begin{align}
H(A|B) &= H\left( \frac{N_C}{2N'}, \frac{N_C}{2N'}, \frac{N_W}{2N'}, \frac{N_W}{2N'} \right) - h\left( \frac{N_C+N_W}{2N'} \right)\notag\\
& = H\left( \frac{N_C}{2N'}, \frac{N_C}{2N'}, \frac{N_W}{2N'}, \frac{N_W}{2N'} \right) - 1.
\end{align}

To evaluate the key-rate assuming this symmetric attack, we must optimize over all $\re{1}{2} \in [-Q,Q]$ if $\Psi = \Psi_3$.  Otherwise, if we are using $\Psi_4$, we have exact values for these parameters and evaluate the expression directly.  Note that, while the choice of $\alpha$ does not affect parameter estimation, it does, in this case, affect the key-rate as it is used as one of the states for key distillation ($\beta$, being used only for parameter estimation, does not).  The resulting key-rate for various values of $\alpha$ and states $\Psi$ is shown in Table \ref{table:B92}.  In that same table, we compare our new tolerated bounds with those found in \cite{QKD-B92-extended}; our new key rate suffers a higher tolerated error than original work in \cite{QKD-B92-extended} (which did not utilize mismatched measurement outcomes) for all $\alpha > 0$.

\begin{table}
\centering
\begin{tabular}{r|ccccc}
$\alpha$ & 0 & $0.342$ & $0.643$ & $0.939$ & $0.985$\\
\cline{2-6}
Old Bound From \cite{QKD-B92-extended} & $11\%$ & $9.3\%$ & $5.7\%$ & $1\%$ & $0.27\%$\\
New Bound Using $\Psi_3$ & $11\%$ & $9.41\%$ & $6.19\%$ & $1.5\%$ & $0.41\%$\\
New Bound using $\Psi_4$ & $12.6\%$ & $11.06\%$ & $7.37\%$ & $1.62\%$ & $0.42\%$

\end{tabular}
\caption{Comparing our new key rate bound for the Extended B92 protocol with the one from \cite{QKD-B92-extended}.  In particular, we compare the maximally tolerated error rates of our bound with the one from \cite{QKD-B92-extended} for a depolarization channel and for various values of $\alpha = \braket{0|a}$ (where $\ket{0}$ and $\ket{a}$ are used to encode the classical value of $0$ and $1$ respectively).}\label{table:B92}
\end{table}

Before leaving this section, let us demonstrate how our key-rate expression can be used when the channel is non symmetric.  The following procedure adapts to all other protocols we consider in this paper (though the key-rate equations are, of course, different).  To demonstrate the procedure we will consider the channel statistics shown in Table \ref{table:non-sym-BB84}.  For this demonstration, we assume here that, for parameter estimation, $\alpha = \beta = 1/\sqrt{2}$.  However, we will then evaluate the key rate, assuming these statistics, if different values of $\alpha$ are used for key distillation.  That is to say, for parameter estimation, Alice will send $\ket{0}, \ket{1}, \ket{+}$, and $\ket{0_Y}$; however for key distillation she will use $\ket{0}$ and $\ket{a}$.  In practice, parameter estimation would also use $\ket{a}$; however as we wish to consider the key rate for various choices of $\alpha$, this would be too cumbersome (changing the parameter of $\alpha$ does not affect $A$ and $B$'s estimate of the various $\re{i}{j}$ used, though it does affect the observed $p_{i,j}$ - thus, to simplify our discussion, we fix the parameter estimation process).

Consider first the case where $\Psi_4$ is used for parameter estimation in which case all values in Table \ref{table:non-sym-BB84} may be observed.  Using Equations \ref{eq:p0a}, \ref{eq:p1a}, and \ref{eq:pa0}, we find:
\begin{align*}
\re{0}{1} &= p_{0,a} - \frac{1}{2}(.868 + .132) = .418 - \frac{1}{2} = -.082\\
\re{2}{3} &= p_{1,a} - \frac{1}{2}(.03+.97) = .605 - \frac{1}{2} = .105\\
\re{0}{2} &= p_{a,0} - \frac{1}{2}(.868 + .03) = .536 - .449 = .087\\
\re{1}{3} &= -\re{0}{2} = -.087
\end{align*}
(The last equality follows, as mentioned in Section \ref{section:PE}, from unitarity of $E$'s attack.)

Similarly, using Equations \ref{eq:p0b}, \ref{eq:p1b}, and \ref{eq:pb0}, we find:
\begin{align*}
\im{0}{1} &= p_{0,b} - \frac{1}{2}(.868 + .132) = .564 - \frac{1}{2} = .064\\
\im{2}{3} &= p_{1,b} - \frac{1}{2}(.03+.97) = .486 - \frac{1}{2} = -.014\\
\im{0}{2} &= \frac{1}{2}(.868+.03) - p_{b,0} = .449 - .472 = -.023\\
\im{1}{3} &= -\im{0}{2} = .023
\end{align*}

Finally, Equations \ref{eq:r03-gen}, and \ref{eq:r12-gen} give us:
\begin{align*}
\re{0}{3} &= 1 - p_{a,\bar{a}} - p_{b,\bar{b}} - \frac{1}{2}(\re{0}{1} + \im{0}{1} + \re{2}{3} + \im{2}{3})\\
&=.86 - .0365 = .8235\\
\re{1}{2} &= p_{b,\bar{b}} - p_{a,\bar{a}} + \frac{1}{2}(\im{0}{1} - \re{0}{1} + \im{2}{3} - \re{2}{3})\\
&= .036 + .0135 = .0495
\end{align*}

Using the above, we may immediately apply Theorem \ref{thm:entropy}.  To demonstrate, assume that, while for parameter estimation above, we had $\alpha = 1/\sqrt{2}$, we now use, for key distillation, $\alpha = .342$ (this may be achieved, for instance, by Alice preparing separate states $\ket{+}$ for parameter estimation, but $\ket{a}$ for key distillation - of course in practice, they would not be separate, we only use $\ket{+}$ for parameter estimation as it simplifies the arithmetic - the method to evaluate the key rate is identical regardless of the choice of this parameter) we may compute the following:
\begin{align*}
N_1^0 &= \ba^2p_{0,0} + \alpha^2p_{0,1} - 2\alpha\ba\re{0}{1} = .835\\
N_1^1 &= \alpha^2p_{0,1} + \ba^2p_{1,1} + 2\alpha\ba\re{1}{3} = .816\\
N_2^0 &= p_{0,1} = .132\\
N_2^1 &= p_{a,\ba} \text{ (From Equation \ref{eq:paa})} = .024\\
N &= N_1^0 + N_1^1 + N_2^0 + N_2^1 = 1.807\\
Re\braket{g_1^0|g_1^1} &= \alpha\ba\re{0}{1} + \ba^2\re{0}{3} - \alpha^2p_{0,1} - \alpha\ba\re{1}{3} = .713\\
Re\braket{g_2^0|g_2^1} &= \alpha\ba\re{0}{1} + \ba^2\re{1}{2} - \alpha^2p_{0,1} = \alpha\ba\re{1}{3} = .03\\\\
\lambda_1 &= .932\\
\lambda_2 &= .895\\
\end{align*}
From this, we apply Theorem \ref{thm:entropy} to find:
\[
r_{B92} \ge S(A|E) - H(A|B) \ge .598 - \left[ H\left( \frac{N_1^0}{N}, \frac{N_1^1}{N}, \frac{N_2^0}{N}, \frac{N_2^1}{N}\right) - h\left(\frac{N_1^0+N_2^1}{N}\right)\right] = .205
\]

\begin{table}
\centering
\begin{tabular}{l|cccccccc|cccc}
Statistic & $p_{0,0}$ & $p_{0,1}$ & $p_{1,0}$ & $p_{1,1}$ & $p_{a,\bar{a}}$ & $p_{0,a}$ & $p_{1,a}$ & $p_{a,0}$ & $p_{b,\bar{b}}$ & $p_{0,b}$ & $p_{1,b}$ & $p_{b,0}$\\
\hline
Observed Value & $.868$ & $.132$ & $.03$ & $.97$ & $.052$ & $.418$ & $.605$ & $.536$ & $.088$ & $.564$ & $.486$ & $.472$
\end{tabular}
\caption{An example of the observable statistics a non-symmetric channel could produce when, for parameter estimation purposes, $\alpha = \beta = 1/\sqrt{2}$.  In the text, we evaluate the key-rate of the Extended B92 protocol assuming this particular channel to demonstrate how arbitrary channels are handled.  If $\Psi_4$ is used, then all statistics shown may be observed; if $\Psi_3$ is used only those statistics to the left of the second vertical bar may be observed (i.e., those states utilizing a $\ket{b}$ cannot be observed).}\label{table:non-sym-BB84}
\end{table}

In the event only $\Psi_3$ is used, $A$ and $B$ cannot observe any statistics $p_{i,j}$ involving a $\ket{b}$ state.  In this case, they are able to determine $\re{0}{1}, \re{2}{3}, \re{0}{2}$, and $\re{1}{3}$ as above (and these numbers, for this particular channel, remain the same).  They may then use Equation \ref{eq:paa}:
\begin{align*}
p_{a,\bar{a}} &= \frac{1}{2} - \frac{1}{2}(\re{0}{2} + \re{0}{1} + \re{0}{3} + \re{1}{2} + \re{2}{3} + \re{1}{3})\\
\Rightarrow \re{0}{3} &= 1 - 2p_{a,\ba} - \re{0}{2} - \re{0}{1} - \re{2}{3} - \re{1}{3} - \re{1}{2}.
\end{align*}
(Again, since this is back to parameter estimation, and in our toy example we are using $\ket{+}$ for this, we have $\alpha = 1/\sqrt{2}$.)

In the above, the only parameter that is not known is $\re{1}{2}$.  However, the Cauchy Schwarz inequality bounds this by $|\re{1}{2}| \le \sqrt{p_{0,1}p_{1,0}}$.  Thus, one must simply optimize over all such $\re{1}{2}$ which minimizes the key-rate bound (we assume $E$ chose an attack which optimizes her information gain).  Doing so, yields a key rate, in this particular example, of $r_\text{B92} \ge .194$.

Table \ref{table:b92-asym} shows this same experiment, using data from Table \ref{table:non-sym-BB84}, for other choices of $\alpha$ (again, for key distillation - the choice for parameter estimation remains fixed at $1/\sqrt{2}$).

\begin{table}
\centering
\begin{tabular}{r|ccccc}
$\alpha$ & 0 & $0.1$ & $0.2$ & $0.342$ & $0.643$\\
\cline{2-6}
Key-Rate Using $\Psi_3$ & $.288$ & $.293$ & $.271$ & $.194$ & $0$\\
Key-Rate using $\Psi_4$ & $.292$ & $.295$ & $.275$ & $.205$ & $.012$

\end{tabular}
\caption{Computing the key rate of the Extended B92 protocol for the asymmetric channel described in Table \ref{table:non-sym-BB84} for various choices of $\alpha$ (where $\alpha$ here is only used for key distillation - for parameter estimation, we used $\ket{+}$ instead of $\ket{a}$).  Note that $\alpha=0$ is actually BB84-style encoding; also note that this is not the optimal value for this particular channel (unlike the symmetric case).}\label{table:b92-asym}
\end{table}

\subsection{An Optimized QKD Protocol}

From our analysis of the Extended B92 protocol above, it is clear that, as $\alpha$ approaches $0$ (i.e., as $\ket{a}$ approaches $\ket{1}$), the tolerated noise level of the protocol increases to the level of BB84 assuming a symmetric channel; furthermore, this produces its maximal noise tolerance.  This leads to the natural question: under what channels is the BB84 style encoding (i.e., using $\ket{i}$ to encode a key bit of $i \in \{0,1\}$) always optimal?  For the case of symmetric channels, it was shown in \cite{QKD-OptimalQKD-Symmetric} that BB84-style encoding produces an optimal result.  We will confirm this result using an alternative method.  Furthermore, we will consider arbitrary, not necessarily symmetric, channels and see that there exist such channels where BB84 would fail, yet through an optimized protocol, a key may be distilled.  While this is not a surprising result, the techniques we use to compute the key-rate of this ``optimized'' protocol, may be useful in future applications and theoretical explorations of this problem.

To investigate this, we will consider a rather general QKD protocol, where if $A$ wishes to encode a key bit of $0$, she will send a qubit $\ket{\psi_0} = \alpha_s\ket{0} + \sqrt{1-\alpha_s^2}\ket{1}$.  Otherwise, to encode a key bit of $1$, she will send the qubit $\ket{\psi_1} = \gamma_s\ket{0} + \sqrt{1-\gamma_s^2}\ket{1}$.  Here we have $\alpha_s,\gamma_s \in [-1,1]$ (the subscript ``$s$'' stands for ``send'').  $B$ will measure the incoming qubit in one of at most two bases; if he receives outcome $\ket{\phi_0} = \alpha_r\ket{0} + \sqrt{1-\alpha_r^2}\ket{1}$ he will output a raw key bit of $0$; otherwise if he measures $\ket{\phi_1} = \gamma_r\ket{0} + \sqrt{1-\gamma_r^2}\ket{1}$ he will set his key bit to be a $1$.  Any other measurement result will be considered inconclusive.  Here we have $\alpha_r,\gamma_r \in [-1,1]$ (the subscript ``$r$'' stands for ``receive''); we do not assume any necessary relationship between the values $\alpha_s, \gamma_s, \alpha_r$, and $\gamma_r$.  $B$ will announce to $A$ any rounds that were inconclusive and they will be discarded.

Additionally, for certain, randomly chosen iterations, $A$ will send a state from $\Psi_4$ and $B$ will measure in the $Z$, $\mathcal{A}$, or $\mathcal{B}$ basis.  These iterations will be used for parameter estimation using the process described in Section \ref{section:PE}.

We call the above protocol Opt-$\Pi$.

Given that our parameter estimation process returned certain statistics based on $E$'s attack, we ask: what are the optimal values for $\alpha_s, \gamma_s, \alpha_r, \gamma_r$ which maximize the key-rate expression?  Of course, the protocol we described above is, by itself, purely a theoretical one in that it assumes $A$ and $B$ know what values to use for these parameters before the protocol even begins.  Clearly, however, they can only know what optimal value to set them to after the protocol has run and parameter estimation returned various statistics on the attack used.  For our purposes, this is irrelevant as we care only to see whether BB84 style encoding always produces an optimal result.  That being said, however, one could potentially extend this to a practical prepare-and-measure QKD protocol using a method described in \cite{QKD-OptimalQKD-Symmetric}, namely as follows:

\begin{enumerate}
  \item Run the parameter estimation process, using states from $\Psi_4$, for $M$ iterations.
  \item Choose $\alpha_s, \gamma_s, \alpha_r, \gamma_r$ so as to optimize the key rate, given that $E$ uses the attack measured in step 1.
  \item Run Opt$-\Pi$ (with parameter estimation).
  \item If the statistics of the channel as determined during the execution of Opt-$\Pi$ differ from those in step one, abort.
\end{enumerate}

Step four prevents $E$ from significantly altering her attack after step one (i.e., it forces $E$ to commit to a particular class of attack strategy, even though she knows the first $M$ iterations are used for parameter estimation only).  She is, of course, free to alter her attack in such a way that does not alter any of the observed statistics.  However, such a change won't affect the key-rate expression once parameters are fixed in step two.  This is due to the fact that these parameters were chosen so as to optimize the key rate given any possible attack which conforms to the observed statistics.  Furthermore, since the process used by $A$ and $B$ in step two to actually choose parameters $\alpha_s, \gamma_s, \alpha_r, \gamma_r$ should be public knowledge, $E$ actually knows what they will chose as soon as she picks an attack strategy (i.e., she may simulate steps one and two).

There are two potential issues, however, with the above idea.  Firstly, step one, being finite, leads to the question of estimation accuracy (a point we safely ignored thus far as we have been working in the asymptotic scenario).  Thus, the statistics learned in step 1 may differ slightly from those in step 3.  This might be to the advantage of $E$ as she may alter her attack slightly after step 1.  Finally, while the above process works for collective attacks (barring the accuracy issue just mentioned), one must be careful when applying a de Finetti argument to extend this approach to general attacks (as we do in a later section of this paper).  Whether such an argument works here we leave as an open question.  However, we comment that further study of this problem along with this prepare-and-measure, optimized QKD approach, may yield interesting, and potentially practically useful, results.

Of course we will be applying Theorem \ref{thm:entropy} to compute the key-rate of this optimized protocol; before we may do so, however, we must first construct the density operator describing a single iteration of the protocol, conditioning, of course, on the event that it is used to distill a raw key.  Such an iteration begins with $A$ sending $\ket{\psi_0}$ or $\ket{\psi_1}$, choosing each with probability $1/2$.  $E$ then attacks the qubit after which it arrives at $B$'s lab.  To simplify notation, let $\beta_r = \sqrt{1-\alpha_r^2}$ and $\delta_r = \sqrt{1-\gamma_r^2}$ (similarly define $\beta_s$ and $\delta_s$).  Then, using our usual notation for $E$'s attack operator (see Equation \ref{eq:U-states}), the state, when it arrives at $B$'s lab but before his measurement, is:
\begin{align*}
&\frac{1}{2} \kb{0}_A \otimes U\kb{\psi_0}U^* + \frac{1}{2}\kb{1}_A \otimes U\kb{\psi_1}U^*\\\\
=&\frac{1}{2} \kb{0}_A \otimes P\left(\ket{0}\otimes(\overbrace{\alpha_s\ket{e_0} + \beta_s\ket{e_2}}^{\ket{f_0}}) + \ket{1}\otimes(\overbrace{\alpha_s\ket{e_1} + \beta_s\ket{e_3}}^{\ket{f_1}})\right)\\
+&\frac{1}{2} \kb{1}_A \otimes P\left(\ket{0}\otimes(\underbrace{\gamma_s\ket{e_0} + \delta_s\ket{e_2}}_{\ket{f_2}}) + \ket{1}\otimes(\underbrace{\gamma_s\ket{e_1} + \delta_s\ket{e_3}}_{\ket{f_3}})\right)\\\\
&=\frac{1}{2}\kb{0}_A \otimes P(\ket{0,f_0} + \ket{1, f_1}) + \frac{1}{2}\kb{1}_A\otimes P(\ket{0,f_2} + \ket{1,f_3}),
\end{align*}
where $P(z) = zz^*$ as before.

$B$ will now measure (perhaps choosing one of two different bases to do so) and set his raw key bit appropriately.  Conditioning on the event he receives an outcome of $\ket{\phi_0}$ or $\ket{\phi_1}$ (which, furthermore, may depend on the event that he chooses to measure in a suitable basis), the above state evolves to:

\begin{align}
\rho_{ABE} &= \frac{1}{2N'}(\kb{00}_{AB}\otimes \kb{g_1^0} + \kb{11}_{AB} \otimes \kb{g_1^1}\\
&+ \kb{01}_{AB}\otimes\kb{g_2^0} + \kb{10}_{AB}\otimes\kb{g_2^1}),\notag
\end{align}
where:
\begin{align*}
\ket{g_1^0} &= \alpha_r\ket{f_0} + \beta_r\ket{f_1}\\
\ket{g_1^1} &= \gamma_r\ket{f_2} + \delta_r\ket{f_3}\\
\ket{g_2^0} &= \gamma_r\ket{f_0} + \delta_r\ket{f_1}\\
\ket{g_2^1} &= \alpha_r\ket{f_2} + \beta_r\ket{f_3},
\end{align*}
and $N'$ is the normalization term:
\[
N' = \frac{1}{2}\sum_{i,j} N_{i,j},
\]
where $N_{i,j} = \braket{g_i^j|g_i^j}$.  (Note that $N_{i,j}$ is a parameter that $A$ and $B$ may observe.)

Because we wish to evaluate this numerically using Theorem \ref{thm:entropy}, we take the time to expand each $N_{i,j}$ in terms of inner-products of the $\ket{e_k}$ states (for which our parameter estimation process will determine bounds).  These are:

\begin{align*}
N_{1,0} &= \alpha_r^2\braket{f_0|f_0} + \beta_r^2\braket{f_1|f_1} + 2\alpha_r\beta_rRe\braket{f_0|f_1}\\
N_{1,1} &= \gamma_r^2\braket{f_2|f_2} + \delta_r^2\braket{f_3|f_3} + 2\gamma_r\delta_rRe\braket{f_2|f_3}\\
N_{2,0} &= \gamma_r^2\braket{f_0|f_0} + \delta_r^2\braket{f_1|f_1} + 2\gamma_r\delta_rRe\braket{f_0|f_1}\\
N_{2,1} &= \alpha_r^2\braket{f_2|f_2} + \beta_r^2\braket{f_3|f_3} + 2\alpha_r\beta_rRe\braket{f_2|f_3}.
\end{align*}

Expanding the above $\braket{f_i|f_j}$ states yields (we use the same notation for $p_{i,j}$ and $\re{i}{j}$ as introduced in Section \ref{section:PE}):
\begin{align*}
\braket{f_0|f_0} &= \alpha_s^2p_{0,0} + \beta_s^2p_{1,0} + 2\alpha_s\beta_s\re{0}{2}\\
\braket{f_1|f_1} &= \alpha_s^2p_{0,1} + \beta_s^2p_{1,1} + 2\alpha_s\beta_s\re{1}{3}\\
\braket{f_2|f_2} &= \gamma_s^2p_{0,0} + \delta_s^2p_{1,0} + 2\gamma_s\delta_s\re{0}{2}\\
\braket{f_3|f_3} &= \gamma_s^2p_{0,1} + \delta_s^2p_{1,1} + 2\gamma_s\delta_s\re{1}{3}\\\\
Re\braket{f_0|f_1} &= \alpha_s^2\re{0}{1} + \alpha_s\beta_s\re{0}{3} + \alpha_s\beta_s\re{1}{2} + \beta_s^2\re{2}{3}\\
Re\braket{f_2|f_3} &= \gamma_s^2\re{0}{1} + \gamma_s\delta_s\re{0}{3} + \gamma_s\delta_s\re{1}{2} + \delta_s^2\re{2}{3}
\end{align*}

To apply Theorem \ref{thm:entropy}, we will need the following identities:
\begin{align*}
Re\braket{g_1^0|g_1^1} &= \alpha_r\gamma_rRe\braket{f_0|f_2} + \alpha_r\delta_rRe\braket{f_0|f_3} + \beta_r\gamma_rRe\braket{f_1|f_2} + \beta_r\delta_rRe\braket{f_1|f_3}\\
Re\braket{g_2^0|g_2^1} &= \alpha_r\gamma_rRe\braket{f_0|f_2} + \beta_r\gamma_rRe\braket{f_0|f_3} + \alpha_r\delta_rRe\braket{f_1|f_2} + \beta_r\delta_rRe\braket{f_1|f_3}.
\end{align*}

And finally, to compute the above, we will additionally need the following:
\begin{align*}
Re\braket{f_0|f_2} &= \alpha_s\gamma_sp_{0,0} + \alpha_s\delta_s\re{0}{2} + \beta_s\gamma_s\re{0}{2} + \beta_s\delta_sp_{1,0}\\
Re\braket{f_0|f_3} &= \alpha_s\gamma_s\re{0}{1} + \alpha_s\delta_s\re{0}{3} + \beta_s\gamma_s\re{1}{2} + \beta_s\delta_s\re{2}{3}\\
Re\braket{f_1|f_2} &= \alpha_s\gamma_s\re{0}{1} + \alpha_s\delta_s\re{1}{2} + \beta_s\gamma_s\re{0}{3} + \beta_s\delta_s\re{2}{3}\\
Re\braket{f_1|f_3} &= \alpha_s\gamma_sp_{0,1} + \alpha_s\delta_s\re{1}{3} + \beta_s\gamma_s\re{1}{3} + \beta_s\delta_sp_{1,1}
\end{align*}

While the above expressions are rather tiresome to look at, they are all functions of parameters that may be directly estimated by $A$ and $B$ using the process described in Section \ref{section:PE}.  Thus, we may evaluate the above numerically with relative ease.

To investigate this protocol, we wrote a simple program, the source code of which is available online\footnote{\texttt{http://www.walterkrawec.org/math/OptimizeQKD.html}}, which, given values for $p_{i,j}$ will output the optimal values of $\alpha_\cdot$ and $\gamma_\cdot$.  For symmetric channels, it is easy to determine values for the $p_{i,j}$.  For arbitrary, asymmetric, channels, setting these values to random numbers will often lead to a physically impossible attack.  That is, we must find values which could actually arise from a unitary attack operator.  We did so by drawing random unitary operators, then simulating the parameter estimation process using such an operator, and finally finding an optimal QKD protocol.

The results of this experiment are shown in Table \ref{table:optimal} comparing the ``optimized'' protocol with the BB84 key rate using $\Psi_4$ for parameter estimation (which may be evaluated using our key rate for the B92 protocol, setting $\alpha = 0$ in the key-distillation phase).  For symmetric channels, it turns out BB84 encoding is in fact optimal, confirming, independently, a result from \cite{QKD-OptimalQKD-Symmetric}.

However, for asymmetric channels (even channels that are ``almost'' symmetric), this is of course not the case; in fact we found channels where BB84 would fail (even using $\Psi_4$ for parameter estimation), yet a key would still be distillable using an optimized encoding process.  Further study of this problem through more analytical means, may provide very interesting results and the work presented here should prove useful in such an effort.

\begin{table}
\centering
\begin{tabular}{r|cc|cccccc}
BB84 & .349 & 0.001 & .304 & .217 & .278 & 0 & 0 & 0\\
Opt-$\Pi$ & .349 & 0.001 & .411 & .39 & .373 & .192 & .117 & .05\\
\hline
$(\alpha_s, \gamma_s)$ & $(1,0)$ & $(1, 0)$ & $(-1, .32)$ & $(-.88, .14)$ & $(-.79, .8)$ & $(.92, -.04)$ & $(.11, -.94)$ & $(-.7, .77)$\\
$(\alpha_r, \gamma_r)$ & $(1,0)$ & $(1, 0)$ & $(-.13, -.97)$ & $(-.97, .62)$ & $(.5, -.64)$ & $(-.61, .98)$ & $(.98, .11)$ & $(-.46, .8)$\\
\hline
$p_{0,1}$ & .07 & .126 & .1 & .162 & .054 & .23 & .264 & .144\\
$p_{1,0}$ & .07 & .126 & .122 & .047 & .179 & .07 & .185 & .213\\
$p_{a,\bar{a}}$ & .07 & .126 & .052 & .134 &  .053 &  .094 & .162 & .144\\
$p_{b,\bar{b}}$ & .07 & .126 & .063 & .068 & .113 & .092 & .171 & .175\\
$p_{0,a}$ & .5 & .5 & .644 & .322 & .574 & .72 & .697 & .364\\
$p_{1,a}$ & .5 & .5 & .452 & .598 & .438 & .387 & .244 & .587\\
$p_{a,0}$ & .5 & .5 & .421 & .545 & .473 & .287 & .268 & .674\\
$p_{0,b}$ & .5 & .5 & .469 & .439 & .585 & .465 & .504 & .287\\
$p_{1,b}$ & .5 & .5 & .485 & .588 & .366 & .584 & .416 & .598\\
$p_{b,0}$ & .5 & .5 & .507 & .545 & .455 & .362 & .387 & .728\\

\end{tabular}
\caption{Top two rows show the key-rate of the BB84 protocol (using states from $\Psi_4$ for parameter estimation) and the key-rate of the Opt-$\Pi$ protocol for various channels (the channel statistics are shown in the lower rows).  Also shown are the optimal values for $\alpha_r, \gamma_r, \alpha_s,$ and $\gamma_s$.  Observe that the last three columns show channels where BB84 would fail ($A$ and $B$ would abort), yet a key may still be distilled by an optimal choice of encoding.  For symmetric channels (a sample of which is shown in the left-most two data columns; note we are using $\alpha = \beta = 1/\sqrt{2}$ in this table), BB84 style encoding, however, is optimal (as also discovered in \cite{QKD-OptimalQKD-Symmetric}).  The software we wrote to produce the data for this table is available online: \texttt{walterkrawec.org/math/OptimizeQKD.html}}\label{table:optimal}
\end{table}

\subsection{A Two-Way Protocol}

We now turn our attention to a QKD protocol reliant on a two-way quantum channel; that is, a channel which allows a qubit to travel from $A$ to $B$ and then back to $A$.  Such a scenario provides an attacker two opportunities to interact with the qubit.  As before, we will consider first collective attacks, an assumption which allows us to model $E$'s attack as two unitary operators $U_F$ applied in the forward direction (when the qubit travels from $A$ to $B$) and $U_R$ applied in the reverse channel (when the qubit travels from $B$ to $A$).  Both operators act on the qubit and $E$'s private quantum ancilla $\mathcal{H}_E$.  See Figure \ref{fig:two-way}.  While we assume $E$'s probe state is cleared to some zero state when the iteration starts, we do not make that assumption between rounds - i.e., when $U_F$ acts, $E$'s ancilla is in a $\ket{\chi}_E$ state; when $U_R$ acts, however, this is not the case (it acts on the same system $U_F$ operated on).  We will show how mismatched measurement bases can be applied here to gather statistics on each of the operators $U_F$, $U_R$ and the joint operator $U_RU_F$ leading to more optimistic key rate bounds than previously considered for the protocol in question.

\begin{figure}
  \centering
  \includegraphics[width=200pt]{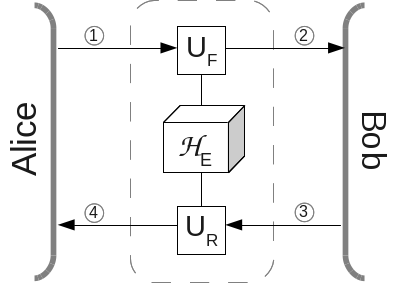}
\caption{A QKD protocol utilizing a two-way quantum channel.  $A$ begins each iteration by sending a qubit to $B$.  $E$ probes this qubit using unitary operator $U_F$, acting on the qubit and $E$'s ancilla $\mathcal{H}_E$.  Later, when the qubit returns from $B$ to $A$, $E$ attacks with a second unitary operator $U_R$ which also acts on the qubit and the same ancilla space $\mathcal{H}_E$.}\label{fig:two-way}
\end{figure}

The protocol we chose to analyze is the one described by Boyer et al., in \cite{SQKD-first}.  It is actually a \emph{semi-quantum} protocol as it makes the assumption that, while the user $A$ is \emph{fully quantum} (i.e., she may prepare and measure qubits in any basis she likes), the other user $B$ is \emph{semi-quantum} or ``classical'' in nature in that he is limited only to measuring and preparing qubits in the computational $Z$ basis.

Let us more formally describe the capabilities of the parties involved.  $A$ is permitted to send any qubit chosen from a set $\Psi$ (as in the previous section we will consider two cases: $\Psi = \Psi_3^\alpha$ and $\Psi = \Psi_4^{\alpha,\beta}$).  This qubit is sent to the ``classical'' user $B$ who may chose to do one of two things:
\begin{enumerate}
  \item \textbf{Measure and Resend}: $B$ will perform a $Z$ basis measurement on the incoming qubit resulting in outcome $\ket{r}$ for $r \in \{0,1\}$ (possibly random if $A$ sent a non-$Z$ basis state).  He will then prepare a new qubit in this state and send it to $A$; i.e., he will send $\ket{r}$ to $A$.
  \item \textbf{Reflect}: $B$ will ``reflect'' the qubit back to $A$.  That is, the qubit will pass through $B$'s lab undisturbed and return to $A$.  In this event $B$ learns nothing about the state of the qubit.
\end{enumerate}

We chose to analyze this particular protocol for several reasons.  Firstly, semi-quantum cryptography is very interesting from a theoretical stand-point as it attempts to answer the question ``how quantum does a protocol need to be to gain an advantage over a classical one?'' \cite{SQKD-first,SQKD-second}.  Thus, finding good bounds on this protocol's key rate is an important question.  Secondly, a lower bound on its key-rate was derived in \cite{SQKD-Krawec-SecurityProof} which did not make use of mismatched measurement bases.  Thus this gives us an additional comparison case showing the advantage of this method of parameter estimation.  Thirdly, it is a two-way protocol which also admits the possibility of mismatched measurements (not all two-way protocols permit such a possibility) and as such allows us to determine how mismatched measurement bases may be used to collect various statistics on the separate unitary operators under $E$'s employ: work which may extend to the analysis of other two-way QKD protocols.  Finally, it shows our technique is strong enough to provide good results even when one of the two users is limited in his capabilities (i.e., $B$ can never measure in the $\mathcal{A}$ or $\mathcal{B}$ bases).

Let us now describe the protocol we consider.  It is a generalized version of the one introduced in \cite{SQKD-first}.  In their protocol, the fully quantum user $A$ prepared qubits randomly in either the $Z$ or $X$ basis.  Here we consider the $Z$, $\mathcal{A}$, and potentially $\mathcal{B}$ basis.  Of course the classical user may only measure in the $Z$ basis.  We will denote this generalized protocol by $\Psi$-SQKD, where $\Psi$ is the set of states that $A$ is allowed to prepare.  This SQKD protocol is described in Protocol \ref{alg:SQKD}.

\begin{algorithm}
\caption{$\Psi$-SQKD}\label{alg:SQKD}
\textbf{Input}:
Let $\Psi \subset Z \cup \mathcal{A}_{\alpha} \cup \mathcal{B}_\beta$ be the set of possible states that $A$ may send to $B$ under the restrictions that $\ket{0}, \ket{1} \in \Psi$.

\textbf{Quantum Communication Stage}:
The quantum communication stage of the protocol repeats the following process:
\begin{enumerate}
  \item $A$ will send a qubit state $\ket{\psi} \in \Psi$, choosing randomly according to some publicly known distribution (we assume $\ket{0}$ and $\ket{1}$ are chosen with equal probability and that all states in $\Psi$ have non-zero probability of being chosen).

    \item $B$ will, with probability $p$ perform the \textbf{measure and resend} operation saving his measurement result.  Otherwise, he will \textbf{reflect} the qubit.
    
    \item With probability $q$, $A$ will measure the returning qubit in the $Z$ basis; otherwise she will measure in the $\mathcal{A}$ or $\mathcal{B}$ basis (assuming a state from such a basis appears in $\Psi$).
    
    \item $A$ will disclose her choice of bases and $B$ will disclose his choice of operation.  On certain, randomly chosen, iterations, $A$ and $B$ will also disclose their measurement results in order to run parameter estimation.  This disclosure is done using the authenticated classical channel.
    
    \item If this iteration is not used for parameter estimation, and if $A$ chose the $Z$ basis in steps 1 and 3, and if $B$ chose to measure and resend, they will use this iteration to contribute towards their raw key.  $B$ will use his measurement result as his key bit; $A$ will use her measurement result from Step 3 as her key bit (an alternative construction would be to use her preparation choice).
\end{enumerate}
\end{algorithm}

Note that we altered Step 3 of the original Boyer et al. \cite{SQKD-first}, protocol so as to allow the chance for mismatched measurements to occur on reflection iterations.  Furthermore, on step 5 (which is also a slight alteration from the original), the reasoning for only using those iterations where $A$ chose $Z$ basis states in both steps (as opposed to only the last) is simply to allow us to compare with prior work.  Similar computations may be performed without this stipulation.  Note that, as with BB84, one may set $p$ and $q$ arbitrarily close to one so as to improve the efficiency of the protocol \cite{QKD-BB84-Modification,SQKD-Krawec-SecurityProof}.

When $\Psi = Z \cup X$, this protocol is exactly that described in \cite{SQKD-first} (excepting for those minor alterations we mentioned above).  We will actually analyze the case when $\Psi = \Psi_3^\alpha$ and $\Psi = \Psi_4^{\alpha,\beta}$.  For the former, when $\alpha = 1/\sqrt{2}$, the protocol is actually one considered in \cite{SQKD-lessthan4} (though in that paper, the protocol was only proven \emph{robust}, a far weaker definition of security than the one we consider in this paper).  The latter case has not been considered.  Regardless, it is clear that, if we develop a lower bound on the key rate when $\Psi = \Psi_3^{\alpha}$, we will also have a lower-bound on the original SQKD protocol (i.e., we will have a lower-bound on the case when $\Psi = Z \cup X$ as dictated by the original protocol of Boyer et al.\cite{SQKD-first}).

As in the previous subsections, to compute the key-rate, we must first describe the joint quantum system held between $A$, $B$, and $E$ after one successful iteration of the protocol (where, as before, we say successful to imply it was used to contribute towards the actual raw key, and not only for parameter estimation).  Let $U_F$ be the unitary attack operator employed by $E$ in the forward channel and $U_R$ that employed by $E$ in the reverse (again, we are first assuming collective attacks).  We may also assume, without loss of generality, that at the start of the iteration $E$'s ancilla is cleared to some $\ket{\chi}_E$ state.  Thus, we may write $U_F$'s action as follows:
\begin{align*}
U_F\ket{0,\chi}_{TE} &= \ket{0,e_0} + \ket{1,e_1}\\
U_F\ket{1,\chi}_{TE} &= \ket{0,e_2} + \ket{1,e_3}.
\end{align*}
(As we did in Section \ref{section:PE}.)

We will write, without loss of generality, $U_R$'s action on states of the form $\ket{i,e_j}$ as:
\[
U_R\ket{i,e_j} = \ket{0, e_{i,j}^0} + \ket{1, e_{i,j}^1}.
\]

Conditioning on a successful iteration, $A$ will send either $\ket{0}$ or $\ket{1}$ choosing each with probability $1/2$.  $E$ will attack with operator $U_F$ and $B$ will measure in the $Z$ basis, save his result in a private register, and resend his result.  This qubit then passes through $E$ a second time.  At this point (when $E$ captures the returning qubit but before applying $U_R$), the joint system is easily found to be:

\begin{align*}
&\frac{1}{2} ( \kb{0}_B \otimes \kb{0,e_0}_{TE} + \kb{1}_B \otimes \kb{1,e_1}_{TE})\\
+&\frac{1}{2} (\kb{0}_B \otimes \kb{0,e_2}_{TE} + \kb{1}_B \otimes \kb{1,e_3}_{TE}).
\end{align*}

$E$ will now attack with her second operator $U_R$ and forward the transit qubit to $A$ who subsequently measures in the $Z$ basis.  The outcome of her measurement is saved in a private register (it is to be her raw key bit for this iteration).  The joint system is now of the form:

\begin{align*}
&\frac{1}{2} \kb{00}_{BA} \otimes \left( \kb{e_{0,0}^0} + \kb{e_{0,2}^0} \right)\\
+&\frac{1}{2} \kb{11}_{BA} \otimes \left( \kb{e_{1,3}^1} + \kb{e_{1,1}^1} \right)\\
+&\frac{1}{2} \kb{01}_{BA} \otimes \left( \kb{e_{0,0}^1} + \kb{e_{0,2}^1} \right)\\
+&\frac{1}{2} \kb{10}_{BA} \otimes \left( \kb{e_{1,3}^0} + \kb{e_{1,1}^0} \right).
\end{align*}

We will now compute $r = S(B|E) - H(B|A)$, the key rate when reverse reconciliation is used.  As described in \cite{SQKD-Krawec-SecurityProof} this seems the more natural choice for this protocol; furthermore, it will allow us to immediately compare our new key rate bound with the old one from \cite{SQKD-Krawec-SecurityProof}.

Writing $\mathbf{e_{i,j}^k}$ to mean $\kb{e_{i,j}^k}$ and tracing out $A$'s system leaves us with the state:
\begin{align}
\rho_{BE} &= \frac{1}{2}\kb{0}_B \otimes(\underbrace{\mathbf{e_{0,0}^0}}_{\mathbf{g_1^0}} + \underbrace{\mathbf{e_{0,2}^0}}_{\mathbf{g_2^0}} + \underbrace{\mathbf{e_{0,0}^1}}_{\mathbf{g_3^0}} + \underbrace{\mathbf{e_{0,2}^1}}_{\mathbf{g_4^0}})\notag\\\notag\\
&+ \frac{1}{2}\kb{1}_B \otimes(\underbrace{\mathbf{e_{1,3}^1}}_{\mathbf{g_1^1}} + \underbrace{\mathbf{e_{1,1}^1}}_{\mathbf{g_2^1}} + \underbrace{\mathbf{e_{1,3}^0}}_{\mathbf{g_3^1}} + \underbrace{\mathbf{e_{1,1}^0}}_{\mathbf{g_4^1}}).\label{eq:sqkd-state}
\end{align}
(In the above state, we also defined those states $\mathbf{g_i^j} = \kb{g_i^j}$ for use when later applying Theorem \ref{thm:entropy}.)

At this point, let us pause to determine some of the statistics which may be gathered through channel tomography.  It is clear that $A$ and $B$ may learn $\bk{i}{i}$; that is, the $Z$ basis noise in the forward channel (note that, while $B$ may measure in the $Z$ basis, he cannot measure in the $\mathcal{A}$ basis, so, for instance, he cannot estimate $p_{0,a}$).  They may also learn $\braket{e_{0,0}^i|e_{0,0}^i}, \braket{e_{0,2}^i|e_{0,2}^i}, \braket{e_{1,1}^i|e_{1,1}^i},$ and $\braket{e_{1,3}^i|e_{1,3}^i}$ for $i = 0,1$.  Indeed, consider $\braket{e_{0,0}^1|e_{0,0}^1}$.  If $A$ sends $\ket{0}$ and $B$ measures $\ket{0}$ (after $E$'s first attack), the joint state becomes:
\[
\frac{\ket{0,e_0}}{\sqrt{\bk{0}{0}}}
\]

Now, $B$ forwards the qubit to $A$, however before it arrives, $E$ attacks with $U_R$ evolving the state to:
\[
\frac{1}{\sqrt{\bk{0}{0}}} (\ket{0,e_{0,0}^0} + \ket{1, e_{0,0}^1}).
\]
Thus, the probability of $A$ measuring $\ket{1}$ is simply $\braket{e_{0,0}^1|e_{0,0}^1}/\bk{0}{0}$, thus providing $A$ and $B$ with the value $\braket{e_{0,0}^1|e_{0,0}^1}$.  Similarly, $A$ and $B$ may directly observe the other $\braket{e_{0,0}^i|e_{0,0}^i}, \braket{e_{0,2}^i|e_{0,2}^i}, \braket{e_{1,1}^i|e_{1,1}^i},$ and $\braket{e_{1,3}^i|e_{1,3}^i}$ for $i = 0,1$.

Note that, if we consider a symmetric attack (which imposes certain simplifications), we may denote by $Q_F$ the $Z$ basis error in the forward channel and $Q_R$ the $Z$ basis error in the reverse channel.  Thus:
\[
\frac{\braket{e_{0,0}^1|e_{0,0}^1}}{\bk{0}{0}} = \frac{\braket{e_{0,0}^1|e_{0,0}^1}}{1-Q_F} = Q_R \Rightarrow \braket{e_{0,0}^1|e_{0,0}^1} = Q_R(1-Q_F).
\]

Similarly, we have:
\begin{align}
(1-Q_F)(1-Q_R) &= \braket{e_{0,0}^0|e_{0,0}^0} = \braket{e_{1,3}^1|e_{1,3}^1}\label{eq:prbounds}\\
(1-Q_F)Q_R &= \braket{e_{0,0}^1|e_{0,0}^1}  = \braket{e_{1,3}^0|e_{1,3}^0}\notag\\
Q_FQ_R &= \braket{e_{1,1}^0|e_{1,1}^0}  = \braket{e_{0,2}^1|e_{0,2}^1}\notag\\
Q_F(1-Q_R) &= \braket{e_{1,1}^1|e_{1,1}^1} = \braket{e_{0,2}^0|e_{0,2}^0}\notag
\end{align}
Of course, as with the other protocols considered, we do not need any symmetry assumption - it only helps to illustrate the resulting bounds and allows us to compare with prior work.

In the following, we consider only $\ket{a} = \ket{+}$ (i.e., $\alpha = 1/\sqrt{2}$).  The case for arbitrary $\alpha$ follows exactly the same arguments, though the problem devolves into an exercise in trivial algebra.  To determine a bound on $S(B|E)$ we will need bounds on $Re\braket{g_i^0|g_i^1}$ for $i=1, 2, 3, $ and $4$.  To do so, we will utilize the mismatched measurement process described before, with certain additions appropriate for our two-way quantum channel.

Note that $A$ and $B$ may, by using $p_{a,0}$ (which we use to denote the probability that, if $A$ sends $\ket{a} = \ket{+}$ that $B$ measures $\ket{0}$), determine $\re{0}{2}$ (and subsequently $\re{1}{3}=-\re{0}{2}$) from Equation \ref{eq:pa0} (where, as before, $\re{i}{j} = Re\braket{e_i|e_j}$ and are thus statistics on $U_F$ the forward attack operator).  In particular, we have:
\[
\re{0}{2} = p_{a,0} - \frac{1}{2}(p_{0,0} + p_{1,0}) = -\re{1}{3}.
\]
Unitarity of the reverse operator $U_R$, then, forces the following relation:
\begin{align*}
\re{0}{2} &= Re\braket{e_{0,0}^0|e_{0,2}^0} + Re\braket{e_{0,0}^1|e_{0,2}^1} = p_{a,0} - \frac{1}{2}(p_{0,0} + p_{1,0})\\
\re{1}{3} &= Re\braket{e_{1,1}^0|e_{1,3}^0} + Re\braket{e_{1,1}^1|e_{1,3}^1} = \frac{1}{2}(p_{0,0} + p_{1,0}) - p_{a,0}.
\end{align*}

Now, consider the probability that $A$ measures $\ket{+}$ if she initially sent $\ket{0}$ and conditioning on the event that $B$'s measurement outcome is $\ket{0}$.  We denote this probability by $p_{0,0,a}$ and it is clearly a statistic that $A$ and $B$ may measure.  It is not difficult to compute this probability by following the evolution of the qubit in this event:
\[
\ket{0} \mapsto \ket{0,e_0}+\ket{1,e_1} \mapsto \frac{\ket{0,e_0}}{\sqrt{p_{0,0}}} \mapsto \frac{ \ket{0,e_{0,0}^0} + \ket{1,e_{0,0}^1}}{\sqrt{p_{0,0}}},
\]
from which we determine:
\begin{equation}
p_{0,0,a} = \frac{1}{2p_{0,0}}(p_{0,0} + 2Re\braket{e_{0,0}^0|e_{0,0}^1}) = \frac{1}{2} + \frac{Re\braket{e_{0,0}^0|e_{0,0}^1}}{p_{0,0}} \Rightarrow Re\braket{e_{0,0}^0|e_{0,0}^1} = p_{0,0}\left(p_{0,0,a} - \frac{1}{2}\right).
\end{equation}
This provides $A$ and $B$ with $Re\braket{e_{0,0}^0|e_{0,0}^1}$.  Similarly, we have:
\begin{align}
p_{1,0,a} &= \frac{1}{2} + \frac{Re\braket{e_{0,2}^0|e_{0,2}^1}}{p_{1,0}} \Rightarrow Re\braket{e_{0,2}^0|e_{0,2}^1} = p_{1,0}\left(p_{1,0,a} - \frac{1}{2}\right)\\
p_{0,1,a} &= \frac{1}{2} + \frac{Re\braket{e_{1,1}^0|e_{1,1}^1}}{p_{0,1}} \Rightarrow Re\braket{e_{1,1}^0|e_{1,1}^1} = p_{0,1}\left(p_{0,1,a} - \frac{1}{2}\right)\\
p_{1,1,a} &= \frac{1}{2} + \frac{Re\braket{e_{1,3}^0|e_{1,3}^1}}{p_{1,1}} \Rightarrow Re\braket{e_{1,3}^0|e_{1,3}^1} = p_{1,1}\left(p_{1,1,a} - \frac{1}{2}\right).
\end{align}
Observe that, in a symmetric attack, where $p_{i,j,a} = 1/2$, the real parts of the above listed inner-products are all zero.

Next, consider the probability that $A$ measures $\ket{+}$ if she initially sent $\ket{+}$ and conditioning on the event $B$ measures a $\ket{0}$.  We denote this probability by $p_{a,0,a}$.  In such an event it is not difficult to describe the qubit's evolution:
\begin{align*}
\ket{+} &\mapsto \frac{1}{\sqrt{2}}\ket{0}(\ket{e_0} + \ket{e_2}) + \frac{1}{\sqrt{2}}\ket{1}(\ket{e_1}+\ket{e_3}) \mapsto \frac{\ket{0}(\ket{e_0} + \ket{e_2})}{\sqrt{2p_{a,0}}}\\
&\mapsto \frac{ \ket{0}(\ket{e_{0,0}^0} + \ket{e_{0,2}^0}) + \ket{1} (\ket{e_{0,0}^1} + \ket{e_{0,2}^1})}{\sqrt{2p_{a,0}}}\\
&\mapsto \frac{1}{\sqrt{4p_{a,0}}}\ket{+}( \ket{e_{0,0}^0} + \ket{e_{0,0}^1} + \ket{e_{0,2}^0} + \ket{e_{0,2}^1}) + \frac{1}{\sqrt{4p_{a,0}}}\ket{-}(\ket{e_{0,0}^0} - \ket{e_{0,0}^1} + \ket{e_{0,2}^0} - \ket{e_{0,2}^1})
\end{align*}
From which, it is clear that:
\begin{align*}
p_{a,0,a} &= \frac{1}{4p_{a,0}}(p_{0,0} + p_{1,0} + 2Re( \braket{e_{0,0}^0|e_{0,0}^1} + \braket{e_{0,0}^0|e_{0,2}^0} + \braket{e_{0,0}^0|e_{0,2}^1} + \braket{e_{0,0}^1|e_{0,2}^0} + \braket{e_{0,0}^1|e_{0,2}^1} + \braket{e_{0,2}^0|e_{0,2}^1})).
\end{align*}

Combining this with the above work yields:
\begin{align}
Re\braket{e_{0,0}^0|e_{0,2}^1} + \braket{e_{0,0}^1|e_{0,2}^0} &= 2p_{a,0}p_{a,0,a} - \frac{1}{2}(p_{0,0}+p_{1,0}) - p_{0,0}\left(p_{0,0,a}-\frac{1}{2}\right) - p_{1,0}\left(p_{1,0,a}-\frac{1}{2}\right)\label{eq:sqkd-sum1}\\
& - p_{a,0} +\frac{1}{2}(p_{0,0}+p_{1,0}).\notag
\end{align}

Repeating the above considering $p_{a,1,a}$, we find:
\begin{align*}
p_{a,1,a} &= \frac{1}{4p_{a,1}}(p_{1,1} + p_{0,1} + 2Re( \braket{e_{1,1}^0|e_{1,1}^1} + \braket{e_{1,1}^0|e_{1,3}^0} + \braket{e_{1,1}^0|e_{1,3}^1} + \braket{e_{1,1}^1|e_{1,3}^0} + \braket{e_{1,1}^1|e_{1,3}^1} + \braket{e_{1,3}^0|e_{1,3}^1})).
\end{align*}
And so:
\begin{align}
Re\braket{e_{1,1}^0|e_{1,3}^1} + \braket{e_{1,1}^1|e_{1,3}^0} &= 2p_{a,1}p_{a,1,a} - \frac{1}{2}(p_{0,1}+p_{1,1}) - p_{0,1}\left(p_{0,1,a}-\frac{1}{2}\right) - p_{1,1}\left(p_{1,1,a}-\frac{1}{2}\right)\label{eq:sqkd-sum2}\\
& + p_{a,0} -\frac{1}{2}(p_{0,0}+p_{1,0}).\notag
\end{align}

The above expressions will become vital now as we consider $Q_A$, the error rate in the $\mathcal{A}$ basis - i.e., the probability that, if $A$ sends $\ket{+}$ and if $B$ reflects, then $A$ measures $\ket{-}$ (again, similar expressions may be derived for arbitrary $\alpha$).

Notice that, if $B$ reflects, his operation is, essentially, the identity operator.  Thus, conditioning on $B$'s choice to reflect, the two-way quantum channel becomes, in essence, a one-way channel with a qubit leaving $A$'s lab, $E$ attacking it via the unitary operator $V = U_RU_F$, and the qubit returning to $A$.  Let us now consider this operator $V$.  We may write its action on basis states as follows (again $E$'s lab is cleared to some state ``$\ket{\chi}_E$'' at the start of the iteration):
\begin{align*}
V\ket{0,\chi}_{TE} &= \ket{0,g_0} + \ket{1,g_1}\\
V\ket{1,\chi}_{TE} &= \ket{0,g_2} + \ket{1,g_3}.
\end{align*}

Due to the linearity of $U_R$ and $U_F$, these states are:
\begin{align}
\ket{g_0} &= \ket{e_{0,0}^0} + \ket{e_{1,1}^0}\label{eq:UEUFg}\\
\ket{g_1} &= \ket{e_{0,0}^1} + \ket{e_{1,1}^1}\notag\\
\ket{g_2} &= \ket{e_{0,2}^0} + \ket{e_{1,3}^0}\notag\\
\ket{g_3} &= \ket{e_{0,2}^1} + \ket{e_{1,3}^1}.\notag
\end{align}

Using the above notation, it is not difficult to show (see Section \ref{section:PE} replacing ``$e$'' states with ``$g$'' states) that the probability $Q_A$ is simply:
\begin{equation}\label{eq:sqkd-QA}
Q_A = \frac{1}{2} - \frac{1}{2}Re(\braket{g_0|g_1} + \braket{g_2|g_3} + \braket{g_0|g_3} + \braket{g_1|g_2}).
\end{equation}

The real part of $\braket{g_0|g_1}$ and $\braket{g_2|g_3}$ may be learned by using the parameter estimation method described in Section \ref{section:PE}, replacing ``$e$'' with ``$g$'' states and using iterations where $B$ reflects (thus, $A$ is performing the parameter estimation procedure with herself on the operator $V = U_RU_F$).  The critical portion of the above $Q_A$ identity is the sum of the real part of $\braket{g_0|g_3}$ with $\braket{g_1|g_2}$.  Using Equation \ref{eq:UEUFg}, we expand this sum as:
\begin{align*}
Re(\braket{g_0|g_3} + \braket{g_1|g_2}) &= Re(\overbrace{\braket{e_{0,0}^0|e_{1,3}^1}}^{E_1} + \overbrace{\braket{e_{1,1}^1|e_{0,2}^0}}^{E_2} + \overbrace{\braket{e_{0,0}^1|e_{1,3}^0}}^{E_3} + \overbrace{\braket{e_{1,1}^0|e_{0,2}^1}}^{E_4}\\
&+\braket{e_{0,0}^0|e_{0,2}^1} + \braket{e_{0,0}^1|e_{0,2}^0}\\
&+\braket{e_{1,1}^0|e_{1,3}^1} + \braket{e_{1,1}^1|e_{1,3}^0}).
\end{align*}
(Note that the inner-product denoted by $E_i$ is required to compute $\lambda_i$ for our lower bound of the conditional entropy of Equation \ref{eq:sqkd-state}; i.e., $E_i = Re\braket{g_i^0|g_i^1} = Re\braket{g_i^1|g_i^0}$.)

Notice that the only unknown quantities now are the values $E_i$ (the last four terms are found by Equations \ref{eq:sqkd-sum1} and \ref{eq:sqkd-sum2}); however we now have several restrictions on these states allowing us to easily compute our lower-bound numerically.  In particular, the Cauchy Schwarz inequality bounds $|E_i|$ while Equation \ref{eq:sqkd-QA} places a further restriction on them.  As it turns out this is sufficient to produce a very good lower-bound on this semi-quantum protocol's keyrate.

To illustrate, we will consider a symmetric channel - the computation for an asymmetric channel is similar.  In this case, we may denote by $Q$ the $Z$ basis noise in one direction of the channel.  That is, $Q$ denotes the probability of a $\ket{i}$ flipping to a $\ket{1-i}$ in the forward channel and also the same in the reverse.  Clearly, we have $Re\braket{g_0|g_1} = Re\braket{g_2|g_3} = 0$ (see Section \ref{section:PE}).  Also, Equations \ref{eq:sqkd-sum1} and \ref{eq:sqkd-sum2} are both zero (indeed, since we have $p_{0,0} = p_{1,1} = 1-Q$, $p_{0,1} = p_{1,0} = Q$, $p_{0,0,a} = p_{1,0,a} = p_{a,0} = 1/2$ and $p_{0,1,a} = p_{1,1,a} = p_{a,0} = 1/2$ this is not difficult to show).  Thus, Equation \ref{eq:sqkd-QA} becomes:
\begin{align}
Q_A &= \frac{1}{2} - \frac{1}{2}Re(E_1 + E_2 + E_3 + E_4)\notag\\
\Rightarrow E_1 &=1 - 2Q_A - E_2 - E_3 - E_4.\label{eq:sqkd-QA-bound}
\end{align}

Applying Theorem \ref{thm:entropy} to the SQKD's density operator expression in Equation \ref{eq:sqkd-state} yields:
\begin{equation}\label{eq:SQKD-SBE-Bound}
S(B|E) \ge (1-Q)^2(1 - h(\lambda_1)) + Q(1-Q)(1-h(\lambda_2)) + Q(1-Q)(1-h(\lambda_3)) + Q^2(1-h(\lambda_4)),
\end{equation}
where:
\begin{align*}
\lambda_1 &= \frac{1}{2} + \frac{|E_1|}{2(1-Q)^2}\\
\lambda_2 &= \frac{1}{2} + \frac{|E_2|}{2Q(1-Q)}\\
\lambda_3 &= \frac{1}{2} + \frac{|E_3|}{2Q(1-Q)}\\
\lambda_4 &= \frac{1}{2} + \frac{|E_4|}{2Q^2}\\
\end{align*}

To evaluate the bound on $S(B|E)$ we must simply optimize the $E_i$ so as to minimize the right-hand-side of Equation \ref{eq:SQKD-SBE-Bound} (we minimize the expression as we assume $E$ chooses an optimal attack strategy).  The values of $E_i$, however, are restricted in the following manner:
\begin{enumerate}
\item $E_1 = 1-2Q_A - E_2 - E_3 - E_4$
\item $|E_1| \le (1-Q)^2$
\item $|E_2| \le Q(1-Q)$
\item $|E_3| \le Q(1-Q)$
\item $|E_4| \le Q^2$
\end{enumerate}
(Condition 1 follows from Equation \ref{eq:sqkd-QA-bound} while Conditions 2-5 follow from the Cauchy Schwarz inequality.)

This optimization is easily performed numerically.  We consider now two cases: First both channels are independent in that $Q_{A} = 2Q(1-Q)$.  The second scenario involves the two channels being correlated in that $Q_{A} = Q$.  In the first case (the independent channel case) we discover the key-rate of the semi-quantum protocol remains positive for all $Q < 7.9\%$.  In the second scenario (the correlated case), the key-rate remains positive for all $Q < 11\%$ exactly that achieved by the fully-quantum BB84 protocol!  In fact, the key rate of the BB84 protocol when the $\mathcal{A}$ basis noise is $2Q(1-Q)$ also remains positive only up to $Q < 7.9\%$.  This is a substantial improvement over previous work as shown in Table \ref{table:sqkd}.  This is also the same noise tolerance supported by the fully quantum, LM05 \cite{QKD-TwoWay-LM05} protocol (which also requires a two-way channel) as demonstrated in \cite{QKD-twoway2} (see Figure 4 and 5 in Ref \cite{QKD-twoway2} for a graph of the LM05's key-rate assuming independent and correlated channels respectively).

\begin{table}
\centering
\begin{tabular}{r|cc}
&Independent & Correlated\\
\cline{2-3}
Old Bound From \cite{SQKD-Krawec-SecurityProof} & $4.57\%$ & $5.34\%$\\
New Bound & $7.9\%$ & $11\%$ \\

\end{tabular}
\caption{Evaluating our new key-rate bound (or, rather, the noise level where it reaches zero) for the Semi-Quantum protocol of Boyer et al. \cite{SQKD-first}.  Also comparing with the original lower-bound from \cite{SQKD-Krawec-SecurityProof} which did not utilize statistics from mismatched measurement results (note that both results are lower-bounds, so there is no contradiction).}\label{table:sqkd}
\end{table}

We have thus shown that a semi-quantum protocol, even though one of the users is limited to performing ``classical'' operations only, can in fact tolerate error rates as high as fully quantum protocols.  Since one of the original motivational reasons to study semi-quantum cryptography was to better understand how quantum a protocol needs to be in order to gain an advantage over its classical counterpart, this is a significant result.

Note that we have so far only considered the case when $A$ sends states from $\Psi_3$.  Interestingly, if we were to attempt to utilize states in $\Psi_4$ (in particular the $\ket{b}$ state), our key-rate bound did not improve using the techniques in this section.  This is most likely because we are not utilizing all possible mismatched measurement statistics for two-way channels - future work may greatly improve this result for the four-state, three-basis scenario.  We conjecture that by adding this forth state, the protocol would attain the same level of security as the six-state BB84 protocol.


\subsection{General Attacks}

In the previous sections, we considered only collective attacks.  However, all protocols considered in this paper may be made permutation invariant in the usual way \cite{QKD-survey,QKD-renner-keyrate}.  Thus, the results of \cite{QKD-general-attack,QKD-general-attack2} apply: namely, proving security against collective attacks is sufficient to show security against arbitrary, general attacks.  In the asymptotic scenario, which we considered here, the key-rate expressions will remain the same.

\section{Closing Remarks}

In this paper, we considered an Extended B92, an optimized QKD protocol, and a semi-quantum protocol.  Our new key-rate bounds for the Extended B92 protocol show it has a higher tolerance to noise than previously thought.  Similarly we derived improved key-rate bounds for the semi-quantum protocol of Boyer et al.\cite{SQKD-first}  In all cases, we did not require any symmetry assumptions (we evaluated our key-rate bounds using a symmetric attack for illustrative and comparative purposes only).

One might attempt to use this technique on other two-way protocols beyond the class of semi-quantum ones.  We did consider this - however the primary advantage of many two-way, fully quantum protocols (i.e., not semi-quantum that we considered here) is that there are no mismatched measurement basis choices.  Thus a modification to the protocols would be required - an improved key rate may be found in this case, but one must then ask what the resulting advantage would be to the modified protocol.  We leave this as an open question.

We also leave as future study improving our parameter estimation method for two-way protocols when $\Psi_4$ is used.

Further study of the use of this technique to optimize a QKD protocol for fixed channels may also provide very interesting theoretical and practical results.

Finally, and very importantly, would be to study the performance of this method in finite key settings and when imperfect parameter estimation occurs.  All results in this paper assumed $A$ and $B$ could perform enough iterations so as to derive arbitrarily precise estimates of various statistics.  In a practical setting, there will always be some error.  Taking this into account, and deriving key-rate expressions in the finite key setting is important future work.


\end{document}